\documentclass[12pt, a4paper]{article}
\makeatother

\usepackage{amsmath}
\usepackage[english]{babel}
\usepackage{amsmath}
\usepackage{amsfonts}
\usepackage{amssymb}
\usepackage{graphicx}
\usepackage{amsthm,amssymb}
\usepackage{epsfig}
\usepackage{float}
\usepackage{color}
\usepackage[top=1in, bottom=1.2in, left=1.4in, right=1.4in]{geometry}
\usepackage{subfigure}
\numberwithin{figure}{section}
\newtheorem{theorem}{Theorem}[section]
\newtheorem{lemma}{Lemma}[section]

\newtheorem{proposition}{Proposition}[section]
\newtheorem{example}{Example}[section]

\newtheorem{definition}{Definition}[section]

\date{}
\usepackage{tikz}
\usetikzlibrary{matrix,arrows,shapes.geometric}
\usepackage{longtable}
\usepackage{soul}
\begin{document}
\title{Reversible cyclic codes over $\mathbb{F}_q + u \mathbb{F}_q$}
\author{\small{Om Prakash, Shikha Patel and Shikha Yadav} \\
		\small{Department of Mathematics} \\ \small{Indian Institute of Technology Patna} \\ \small{Bihta, Patna - 801 106, India} \\  \small{om@iitp.ac.in, shikha$\_$1821ma05@iitp.ac.in, 1821ma10@iitp.ac.in}}
\maketitle
\begin{center}
\textbf{Abstract}
\end{center}

	 Let $q$ be a power of a prime $p$. In this paper, we study reversible cyclic codes of arbitrary length over the ring $ R = \mathbb{F}_q + u \mathbb{F}_q$, where $u^2=0 ~mod~q$. First, we find a unique set of generators for cyclic codes over $R$, followed by a classification of reversible cyclic codes with respect to their generators. Also, under certain conditions, it is shown that dual of reversible cyclic code is reversible over $\mathbb{Z}_2+u\mathbb{Z}_2$. Further, to show the importance of these results, some examples of reversible cyclic codes are provided. \\

\hspace{-.7cm}\textbf{Keywords:} Linear codes; Cyclic codes; Hamming distance; Generator polynomial; Dual codes.\\

\hspace{-.7cm}\textbf{AMS Subject Classification:} 94B05, 94B15.
\section{Introduction}

 In algebraic coding theory, linear code over finite rings acquires intensive study in the last decades of $20$th century. This study over finite rings was prompted after the accomplishment of Gray maps. Remarkable steps came in $1994$ when Hammons et al. $\cite{16}$ obtained some good non-linear binary codes as an image of linear codes over $\mathbb{Z}_4$ under the Gray map. Afterward, the study of linear codes over finite rings have got more attention than a binary field and several families of codes were studied in \cite{22,T,12,20,23}, such as over $\mathbb{Z}_4,~ \mathbb{Z}_2+v\mathbb{Z}_2,v^2 =v;~ \mathbb{Z}_2+u\mathbb{Z}_2+v\mathbb{Z}_2+uv\mathbb{Z}_2, u^2 =v^2 =0;~ \mathbb{Z}_{p^r} +u\mathbb{Z}_{p^r} +\cdots+u^{k-1}\mathbb{Z}_{p^r}, u^k = 0,$ where $p$ is a prime. Note that cyclic codes are block linear codes in which the cyclic shift of each codeword is again a codeword. These error-correcting codes are also considered an important family of linear codes due to their rich algebraic structure, which makes this class easy to understand and implement. These codes have been studied over various finite rings and many new codes and results have been obtained in \cite{12,20,15,18}.

In $1970$, Hartmann and Tzeng \cite{18} have given bound for the minimum distance of certain reversible cyclic codes. In $2007$, Siap and Abualrub $\cite{14}$  studied the structure of reversible cyclic codes over $\mathbb{Z}_4$. In $2015$, Srinivasulu and Bhaintwal $\cite{11}$ studied reversible cyclic codes over $\mathbb{F}_4+u\mathbb{F}_4, u^2=0$ and their applications to DNA codes, meanwhile Sehmi et al. $\cite{19}$ studied reversible and reversible complement cyclic codes over Galois rings.

Motivated by these works, we study reversible cyclic codes of arbitrary length $n$ over $\mathbb{F}_q + u \mathbb{F}_q$, $u^2=0 ~mod~q$. Recall that these codes have applications in DNA computing which is a field of study that aims at harnessing individual molecules at the nanoscopic level for computational purposes. Computation with DNA molecules possesses an inherent interest for researchers in computer and biology. At present, many researchers have been interested in designing a new set of codewords for each experiment depending on various design constraints in DNA computing. One can prevent errors by minimizing the similarity between the sequences under some distance measure. These codes have many applications in constructing data storage and retrieval systems. \\

The presentation of the manuscript is as follows: In Section $2$, we give some preliminaries while Section $3$ provides the structure of cyclic codes of arbitrary length $n$ over the ring $R$. Section $4$ contains some important results on reversible cyclic codes over $\mathbb{F}_q + u \mathbb{F}_q$. In Section $5,$ some conditions are given under which dual of reversible cyclic code over $R$ is reversible. Section $6$ includes some examples in support of our results and Section $7$ concludes the work.

\section{Basic definitions and construction of cyclic codes over $ \mathbb{F}_q + u \mathbb{F}_q$ }
Throughout the article, $R= \mathbb{F}_q + u \mathbb{F}_q$, where $u^2=0~mod~q$ and $q= p^{k},$ a positive integer power of a prime $p$. Then $R$ is a commutative ring having $q^2$ elements.\\
Recall that a linear code $C$ of length $n$ over $R$ is an $R$-submodule of $R^n$ and cyclic code is a linear code invariant under the shift operator which maps $(c_0,c_1,\dots,c_{n-1})$ to $(c_{n-1},c_0,\dots,c_{n-2})$. Also, cyclic code over $R$ can be viewed as an ideal of $R_n=R[x]/ \langle x^n -1 \rangle $, identifying $(c_0,c_1,\dots,c_{n-1})$ by $c_0 +c_1x+\cdots+c_{n-1}x^{n-1}$. For $v=(v_0,v_1,\dots,v_{n-1}) \in R^n$, the vector obtained after reversal of components of $v$ is denoted by $v^r=(v_{n-1},v_{n-2}\dots,v_0)$.\\
The Hamming weight of a codeword is the number of non-zero components in it and the Hamming distance between any two codewords is number of components in which these two differ. The inner product of two vectors $a=(a_0,\dots,a_{n-1})$ and $b=(b_0,\dots,b_{n-1})$ is defined as $a \cdot b={\Sigma }_{i=0}^{n-1}a_i b_i$. The vectors $a$ and $b$ are said to be orthogonal if $a \cdot b=0$.
The dual $C^{\perp}$ of a linear code $C$ is defined as $C^{\perp} =\{ v \in R^n: v \cdot c=0 ~\text{for all } c \in C\}$. A linear code $C$ is said to be self dual if and only if $C=C^{\perp}$, and self-orthogonal if and only if $C\subseteq C^{\perp}$.
For each polynomial $f(x)=f_0 + f_1x+ \cdots + f_{n-1}x^{n-1}$ with $f_{n-1}\neq 0$, the reciprocal of $f(x)$ is defined as $f^*(x)=x^{n-1}f(1/x)=f_{n-1} + f_{n-2}x+\cdots+ f_{0}x^{n-1}$. Note that $deg f^*(x) \leq deg f(x)$, and if $f_0 \neq 0$, then $deg f^*(x)= deg f(x)$. The polynomial $f(x)$ is called self-reciprocal if and only if $f^*(x)= f(x)$.

Let $C$ be a cyclic code over $R$. Then a map $\phi:C \rightarrow \mathbb{F}_q[x]/\langle x^n-1 \rangle$ defined by $\phi(a_0+a_1x+\cdots+ a_{n-1}x^{n-1})=a_0^q+a_1^qx+\cdots+ a_{n-1}^qx^{n-1}$ is a ring homomorphism with $ker\phi=\{ur(x)| r(x) ~\text{is a polynomial in} ~  \mathbb{F}_q[x]/\langle x^n-1 \rangle\}$. Let $J=\{r(x) ~ : ~ur(x) \in ker\phi \}$. Then $J$ is a cyclic code over $\mathbb{F}_q$, being an ideal of $\mathbb{F}_q[x]/\langle x^n-1 \rangle$. Therefore, $J=\langle a(x) \rangle ~ \text{where}~ a(x)|(x^n-1)$. This implies $ ker\phi = \langle ua(x) \rangle ~\text{where}~ a(x)|(x^n-1)~ mod~q$. Since image of $\phi$ is an ideal and hence a cyclic code over $\mathbb{F}_q$ with generator polynomial $g(x)$ such that $g(x)|(x^n-1)$. Hence,
 $$C = \langle g(x)+up(x), ua(x) \rangle$$ for some polynomial $p(x)$ over $\mathbb{F}_q$.

 Throughout the article, we use same $g(x), ~p(x)$ and $a(x)$ as mentioned above. Now, we will give some lemmas and theorem having proof with similar arguments as given in \cite{12}, and will be used later for the discussion on reversible cyclic code.

\begin{lemma}
For above $a(x)$ and $p(x),$ $deg~ a(x)>deg ~p(x)$ and $a(x)\mid g(x)$.
\end{lemma}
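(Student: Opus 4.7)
My approach is to handle the two assertions of the lemma separately. For the inequality $\deg a(x) > \deg p(x)$ I will use the division algorithm to reduce $p(x)$ modulo $a(x)$ and absorb the quotient into the generator $u a(x)$; for the divisibility $a(x) \mid g(x)$ I will exploit the relation $u^2 = 0$ to produce the element $u g(x) \in C$ and translate its membership in $\langle u a(x)\rangle$ back into a polynomial divisibility over $\mathbb{F}_q$.

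\textbf{Degree inequality.} I would apply the division algorithm in $\mathbb{F}_q[x]$: write $p(x) = a(x) q(x) + r(x)$ with $r(x)=0$ or $\deg r(x) < \deg a(x)$. Then
\[
g(x) + u p(x) \;=\; \bigl(g(x) + u r(x)\bigr) \;+\; q(x)\cdot u a(x),
\]
so the ideal generated by $g(x)+up(x)$ and $ua(x)$ is unchanged if $p(x)$ is replaced by $r(x)$. Hence the canonical polynomial appearing in the generating set can (and should) be chosen to satisfy $\deg p(x) < \deg a(x)$.

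\textbf{Divisibility.} Multiplying the first generator by $u$ and using $u^2 = 0$ gives
\[
u\bigl(g(x) + u p(x)\bigr) \;=\; u g(x) \;\in\; C.
\]
Since $\phi$ annihilates every $u$-multiple, $u g(x) \in \ker \phi = \langle u a(x)\rangle$, so $u g(x) = u a(x)\, t(x)$ in $R_n$ for some $t(x) \in R_n$. Decomposing $t(x) = t_0(x) + u t_1(x)$ with $t_i(x) \in \mathbb{F}_q[x]/\langle x^n-1\rangle$ and using $u^2 = 0$, the right-hand side collapses to $u a(x) t_0(x)$, so $u\bigl(g(x) - a(x) t_0(x)\bigr) = 0$ in $R_n$. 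Taking a degree-$<n$ representative of $g(x) - a(x) t_0(x)$ and observing that $u\cdot c = 0$ with $c \in \mathbb{F}_q$ forces $c = 0$, I conclude $g(x) \equiv a(x) t_0(x) \pmod{x^n-1}$ in $\mathbb{F}_q[x]$. Because $a(x) \mid (x^n-1)$, this upgrades to $a(x) \mid g(x)$ in $\mathbb{F}_q[x]$.

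The main obstacle, and the step requiring the most care, is the translation of an identity of the form $u h(x) = 0$ inside $R_n$ into an honest equality of polynomials in $\mathbb{F}_q[x]/\langle x^n-1\rangle$. One has to keep the $\mathbb{F}_q$-part and the $u\mathbb{F}_q$-part of $R_n$ cleanly separated, ensure that reduction modulo $x^n-1$ does not intermix them, and be explicit that the annihilator of $u$ in $R$ meets $\mathbb{F}_q$ only at zero. Once this bookkeeping is under control, both claims follow immediately from the construction of $C$ and of the homomorphism $\phi$.
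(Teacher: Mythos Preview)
Your argument is correct and follows essentially the same route as the paper: replace $p(x)$ by its remainder modulo $a(x)$ to force the degree inequality, and observe $ug(x)\in\ker\phi=\langle ua(x)\rangle$ to obtain $a(x)\mid g(x)$. Your version is simply more explicit---the paper asserts $ug(x)\in\ker\phi$ and concludes divisibility in one line, whereas you spell out both why $ug(x)\in C$ (multiplying the first generator by $u$) and how an identity $u h(x)=0$ in $R_n$ descends to $\mathbb{F}_q[x]$.
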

\begin{proof}
Note that
 $$ C = \langle g(x)+up(x), ua(x) \rangle = \langle g(x)+u(p(x)+x^i a(x)), ua(x) \rangle.$$ Meanwhile, $$ \langle g(x)+up(x), ua(x) \rangle = \langle g(x)+u(p(x)+d(x) a(x)), ua(x) \rangle.$$ Therefore, from above, we may assume $deg(p(x)) < deg (a(x))$. Also, $$ug(x) \in ker \phi = \langle ua(x) \rangle$$ implies that $a(x)|g(x)$. If $g(x)= a(x)$, then $C=\langle g(x)+up(x) \rangle $.
\end{proof}
\begin{lemma}
$a(x)$ divides $p(x) \left(\frac{ x^n-1}{g(x)}\right)$.
\end{lemma}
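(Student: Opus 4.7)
The plan is to show the divisibility by producing an explicit element of $\ker\phi$ that carries the information $p(x)\cdot\tfrac{x^n-1}{g(x)}$, and then exploit the fact that $\ker\phi$ is generated by $ua(x)$.

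First, I would consider the element $\bigl(g(x)+up(x)\bigr)\cdot \tfrac{x^n-1}{g(x)}\in R_n$. Since $g(x)\cdot\tfrac{x^n-1}{g(x)}=x^n-1\equiv 0$ in $R_n$, this element reduces to $up(x)\cdot\tfrac{x^n-1}{g(x)}$. Being an $R_n$-multiple of a generator of $C$, it lies in $C$; being divisible by $u$, it is clearly annihilated by $\phi$, so it lies in $\ker\phi=\langle ua(x)\rangle$. Hence there exists a polynomial $k(x)\in\mathbb{F}_q[x]$ such that
\[
u\,p(x)\cdot\frac{x^n-1}{g(x)} \;\equiv\; u\,a(x)\,k(x)\pmod{x^n-1}
\]
in $R[x]$.

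Next I would translate this congruence in $R[x]$ into a divisibility statement in $\mathbb{F}_q[x]$. Writing the left-right difference as $u\,f(x)$ with $f(x)\in\mathbb{F}_q[x]$, a congruence $u\,f(x)\equiv 0 \pmod{x^n-1}$ in $R[x]$ forces $f(x)$ itself to be a multiple of $x^n-1$ in $\mathbb{F}_q[x]$; this follows by decomposing any witness $s(x)=s_1(x)+us_2(x)\in R[x]$ and comparing the $1$-part and $u$-part (the $u^2=0$ relation kills the $s_2$ contribution). Consequently,
\[
p(x)\cdot\frac{x^n-1}{g(x)} - a(x)\,k(x) \;=\; (x^n-1)\,s(x)
\]
for some $s(x)\in\mathbb{F}_q[x]$.

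Finally, I would use $a(x)\mid g(x)\mid(x^n-1)$, established in the previous lemma, to conclude. Since $a(x)$ divides both $a(x)k(x)$ and $(x^n-1)s(x)$, it must divide the remaining term $p(x)\cdot\tfrac{x^n-1}{g(x)}$ in $\mathbb{F}_q[x]$, which is exactly the claim.

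The only delicate step is the passage from the congruence modulo $x^n-1$ in $R[x]$ to an honest equality in $\mathbb{F}_q[x]$; the rest is routine. This small lifting argument is the only place where the ambient ring structure $u^2=0$ is actually used, and it is where I would take the most care in writing out the details.
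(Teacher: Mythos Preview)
Your argument is correct and follows the same route as the paper: multiply the generator $g(x)+up(x)$ by $\tfrac{x^n-1}{g(x)}$ to land $u\,p(x)\tfrac{x^n-1}{g(x)}$ in $\ker\phi=\langle ua(x)\rangle$, and read off the divisibility. The only difference is that you make explicit the passage from a congruence in $R_n$ to an honest divisibility in $\mathbb{F}_q[x]$ (using $u^2=0$ and $a(x)\mid x^n-1$), a step the paper simply asserts.
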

\begin{proof}
Since $$\phi\left(\frac{x^n-1}{g(x)}(g(x)+up(x))\right)=\phi\left( up(x) \frac{x^n-1}{g(x)}\right)=0.$$
This implies $\left(up(x)\frac{x^n-1}{g(x)}\right)\in ker\phi = \langle ua(x) \rangle$. Therefore, $a(x)|\left(p(x)\frac{x^n-1}{g(x)}\right).$
\end{proof}
\begin{lemma}
Let $C = \langle g(x)+up(x), ua(x) \rangle =  \langle h(x)+uq(x), ub(x) \rangle $. Then $g(x)=h(x), a(x)=b(x)$ and $p(x)=q(x)~ mod ~a(x)$.
\end{lemma}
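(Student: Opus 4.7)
The plan is to extract the three claimed equalities by isolating the ``$\mathbb{F}_q$-part'' (the image of $\phi$) and the ``$u$-part'' (the kernel of $\phi$) from each of the two presentations of $C$, and then comparing.

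First I would apply $\phi$ to both descriptions of $C$. Because $\phi$ kills $ua(x)$ and $ub(x)$, the image $\phi(C)$ equals the cyclic code over $\mathbb{F}_q$ generated by $g(x)$ from the first presentation and by $h(x)$ from the second. Since the monic divisor of $x^n-1$ generating a cyclic code over $\mathbb{F}_q$ is unique, this forces $g(x)=h(x)$.

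Next I would compare kernels. From the discussion preceding the lemma, $\ker\phi$ coincides with $\langle ua(x)\rangle$ in the first description and with $\langle ub(x)\rangle$ in the second, so these two ideals of $R_n$ are equal. Translating back through $J=\{r(x):ur(x)\in\ker\phi\}$, both $a(x)$ and $b(x)$ must be the unique monic generator (dividing $x^n-1$) of the same cyclic code $J$ over $\mathbb{F}_q$, yielding $a(x)=b(x)$.

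Finally, with $g(x)=h(x)$ and $a(x)=b(x)$ in hand, both $g(x)+up(x)$ and $g(x)+uq(x)$ belong to $C$. Their difference $u(p(x)-q(x))$ then lies in $C\cap\ker\phi=\langle ua(x)\rangle$, so $u(p(x)-q(x))=ua(x)\,r(x)$ for some $r(x)\in R_n$. Reducing modulo $u$, and using $a(x)\mid(x^n-1)$ so that the quotient by $\langle x^n-1\rangle$ causes no loss, gives $a(x)\mid(p(x)-q(x))$ in $\mathbb{F}_q[x]$, i.e.\ $p(x)\equiv q(x)\pmod{a(x)}$.

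The only subtlety I anticipate is in the middle step: for the equality of kernels to force $a(x)=b(x)$ on the nose (and not merely up to associates or addition of a multiple of $(x^n-1)/a(x)$), I must invoke the normalization that $a(x)$ and $b(x)$ are the unique monic divisors of $x^n-1$ generating $J$ — the same convention already used when writing $C=\langle g(x)+up(x),ua(x)\rangle$. With that convention flagged, the three conclusions follow with no further calculation.
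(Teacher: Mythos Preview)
Your proposal is correct and follows essentially the same route as the paper: identify $\phi(C)=\langle g(x)\rangle=\langle h(x)\rangle$ to get $g=h$, identify $J=\langle a(x)\rangle=\langle b(x)\rangle$ via $\ker\phi$ to get $a=b$, and then subtract the two first generators to land in $\langle ua(x)\rangle$. If anything, your third step (observing the difference lies in $C\cap\ker\phi$) is a shade more careful than the paper's, which simply asserts $g(x)+uq(x)=(g(x)+up(x))+ua(x)l(x)$ without explaining why the multiplier on the first generator is $1$.
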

\begin{proof}
From the construction of $C$, we have $J=\langle a(x) \rangle=\langle b(x) \rangle$, i.e., $a(x)=b(x)$.
Let $C = \langle g(x)+up(x), ua(x) \rangle= \langle h(x)+uq(x), ub(x) \rangle$.
Since $h(x)\in \phi(C)= \langle g(x) \rangle $, we have  $$h(x)=g(x)\alpha (x)~ \text{ and }deg~h(x) \geq deg~g(x).$$ Similarly, $$g(x)=h(x)\beta (x)=g(x)\alpha(x)\beta(x) ~\text{and}~ deg~g(x) \geq deg~h(x).$$ Now, $(x^n-1)$ factors uniquely into irreducible polynomials over $\mathbb{F}_q$ and $g(x), h(x)$ are monic polynomials which divides $(x^n-1)$, therefore $\alpha(x)=\beta(x)=1$ and $g(x)=h(x)$. Since $g(x)+uq(x) \in C$, we have $$g(x)+uq(x)=(g(x)+up(x))+ua(x)l(x), ~\text{for some}~l(x)\in R[x].$$ This implies $u(q(x)-p(x))=ua(x)l(x)$ and hence $p(x)=q(x) ~mod ~a(x).$
\end{proof}
\begin{lemma}
If $n$ is relatively prime to $q$, then $C=\langle g(x),ua(x)\rangle =\langle g(x)+ua(x)\rangle$.
\end{lemma}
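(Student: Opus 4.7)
The plan is to establish the two claimed equalities in turn, using crucially that $\gcd(n,q)=1$ forces $x^{n}-1$ to be squarefree over $\mathbb{F}_q$.

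First, I would show $C=\langle g(x),ua(x)\rangle$ by arguing that the polynomial $p(x)$ in the standard presentation $C=\langle g(x)+up(x),ua(x)\rangle$ must vanish. Since $a(x)\mid g(x)$ and $g(x)\mid(x^{n}-1)$, writing $h(x):=(x^{n}-1)/g(x)$ gives $a(x)h(x)\mid x^{n}-1$; as $x^{n}-1$ is squarefree, $\gcd(a(x),h(x))=1$ in $\mathbb{F}_q[x]$. Lemma 2.2 then yields $a(x)\mid p(x)h(x)$, hence $a(x)\mid p(x)$, while Lemma 2.1 gives $\deg p<\deg a$. Together these force $p(x)=0$, so $C=\langle g(x),ua(x)\rangle$.

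For the equality $\langle g(x),ua(x)\rangle=\langle g(x)+ua(x)\rangle$, the inclusion $\supseteq$ is immediate. For $\subseteq$, it suffices to show $ua(x)\in\langle g(x)+ua(x)\rangle$, after which $g(x)=(g+ua)-ua$ also lies in this ideal. Multiplying the generator by $u$ gives $u(g+ua)=ug=ua\cdot m(x)$, where $m(x)=g(x)/a(x)\in\mathbb{F}_q[x]$. Multiplying the generator by $h(x)$ and reducing modulo $x^{n}-1$ in $R_n$ gives $(g+ua)h=gh+uah=uah$, so $uah\in\langle g+ua\rangle$. The polynomials $m$ and $h$ are coprime, since $a\cdot m\cdot h=x^{n}-1$ is squarefree and therefore $m,h$ have disjoint root sets. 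By Bezout there exist $\alpha(x),\beta(x)\in\mathbb{F}_q[x]$ with $\alpha m+\beta h=1$, yielding $ua=\alpha(uam)+\beta(uah)\in\langle g+ua\rangle$, as required.

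The main obstacle I anticipate is the Bezout manoeuvre in the second step: the reduction to $p(x)=0$ is a routine consequence of squarefreeness together with Lemmas 2.1 and 2.2, but extracting $ua(x)$ from the principal ideal $\langle g+ua\rangle$ requires combining two different cancellations—multiplication by $u$ (which kills $g$ via $u^{2}=0$) and multiplication by $h$ (which kills $g$ modulo $x^{n}-1$)—and then merging the two resulting multiples of $ua$ through coprimality of the residual factors $m$ and $h$. Once this is in hand, the theorem follows.
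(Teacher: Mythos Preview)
Your proposal is correct and follows essentially the same route as the paper: first eliminate $p(x)$ using the squarefreeness of $x^{n}-1$ together with Lemmas~2.1 and~2.2, then recover $ua(x)$ from the single generator $g+ua$ via a B\'ezout identity applied to two auxiliary multiples of $ua$ obtained by multiplying the generator by $u$ and by $h=(x^{n}-1)/g$. The only cosmetic difference is the pair used in the B\'ezout step: the paper takes $\gcd(h,g)=1$, whereas you take $\gcd(m,h)=1$ with $m=g/a$; since $m\mid g$, your coprimality follows from theirs (and both follow from squarefreeness), and the resulting combinations yield $ua$ in the same way.
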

\begin{proof}
Let $a(x)|g(x)$ and $a(x)|p(x) \left(\frac{ x^n-1}{g(x)}\right)$. Then $g(x)=a(x)l_1(x)$ and\\ $p(x) \left(\frac{ x^n-1}{g(x)}\right)=a(x)l_2(x)$. Since $n$ is relatively prime to $q$, $x^n-1$ can uniquely be written as product of distinct irreducible polynomials and hence $a(x)$ must be a factor of $p(x)$. But $deg~p(x) < deg~a(x),$ therefore, $p(x)=0$ and $C=\langle g(x),ua(x)\rangle$.

Let $b(x)=g(x)+ua(x)$. Then $ub(x)=ug(x) \in \langle g(x)+ua(x)\rangle$ and $\left(\frac{x^n-1}{g(x)}\right)b(x)=u\left(\frac{x^n-1}{g(x)}\right)a(x) \in \langle g(x)+ua(x)\rangle$. Since $gcd~\left(\frac{x^n-1}{g(x)}, g(x)\right)=1$, there exist polynomials $g_1(x)$ and $g_2(x)$ over $\mathbb{F}_q$ such that
\begin{align*}
  1&=\frac{x^n-1}{g(x)}g_1(x)+ g(x)g_2(x)
\\& ua(x)=u\left(\frac{x^n-1}{g(x)}\right)a(x)g_1(x)+ ug(x)a(x)g_2(x)
\in \langle g(x)+ua(x)\rangle.
\end{align*}
Also, $g(x)=b(x)-ua(x) \in\langle g(x)+ua(x)\rangle$. Therefore, $C=\langle g(x), ua(x)\rangle=\langle g(x)+ua(x)\rangle.$
\end{proof}

\begin{theorem}\label{Th1}
Let $C$ be a cyclic code of length $n$ over $R$.
\begin{enumerate}

\item If $n$ is relatively prime to $q$, then $R[x]/ \langle x^n - 1 \rangle$ is a principal ideal ring and $C = \langle g(x),ua(x) \rangle = \langle g(x)+ua(x) \rangle$
where $g(x)$, $a(x)$ are polynomials over $\mathbb{F}_q$ with $ a(x)|g(x)|(x^n -1)~ mod~ q$.
\item  If $n$ is not relatively prime to $q$, then

\item[(a)] $C = \langle g(x)+up(x) \rangle$ where $g(x)$, $p(x)$ are polynomials over $\mathbb{F}_q$ with $g(x)|(x^n -1) ~ mod~ q $, $ (g(x)+up(x))|(x^n -1)$ and $g(x)|p(x) \left( \frac{ x^n-1}{g(x)}\right)$ and also $g(x)=a(x)$.
\item[(b)] $C= \langle g(x) + up(x), ua(x) \rangle $ where $g(x)$, $a(x)$, and $p(x)$ are polynomials over $\mathbb{F}_q$ with $ a(x)|g(x)|(x^n -1)~ mod~ q$, $ (g(x)+up(x))|(x^n -1)$, $a(x)|p(x) \left(\frac{ x^n-1}{g(x)}\right)$ and
$ deg(g(x)) > deg(a(x)) > deg(p(x))$.

\end{enumerate}
\end{theorem}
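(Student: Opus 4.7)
The plan is to assemble Theorem \ref{Th1} from the four preceding lemmas together with one short divisibility computation. Starting from the representation $C = \langle g(x)+up(x),\, ua(x)\rangle$ established in the construction before Lemma 1.1, I have at my disposal $a(x)\mid g(x)$ and $\deg a > \deg p$ from Lemma 1.1, $a(x)\mid p(x)(x^n-1)/g(x)$ from Lemma 1.2, and $C = \langle g(x)+ua(x)\rangle = \langle g(x),\, ua(x)\rangle$ from Lemma 1.4 when $\gcd(n,q)=1$.

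For Part 1, Lemma 1.4 yields the principal form at once; the fact that $R[x]/\langle x^n-1\rangle$ is then a principal ideal ring follows because every one of its ideals is a cyclic code and hence principal by the same lemma. The divisibility chain $a(x)\mid g(x)\mid(x^n-1)\bmod q$ is already in hand from the construction.

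For Part 2, the case distinction hinges on $\deg a$ versus $\deg g$. Since $a$ and $g$ are both monic with $a\mid g$, equality of degrees forces $a(x)=g(x)$; the closing remark of Lemma 1.1 then collapses the two-generator form to $C=\langle g(x)+up(x)\rangle$, yielding Part 2(a). Otherwise $\deg g>\deg a>\deg p$ and I stay with the two-generator presentation of Part 2(b), all of whose degree and divisibility conditions (other than $(g+up)\mid(x^n-1)$) are transcribed directly from Lemmas 1.1 and 1.2.

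The remaining claim is the polynomial divisibility $(g(x)+up(x))\mid(x^n-1)$ in $R[x]$. Writing $x^n-1=g(x)h(x)$ in $\mathbb{F}_q[x]$ and expanding $(g+up)(h+uw)$ with $u^2=0$, this divisibility is seen to be equivalent to the $\mathbb{F}_q[x]$ identity $g(x)\mid p(x)h(x)$. In Part 2(a), since $a=g$, this is exactly the content of Lemma 1.2. In Part 2(b) I would exploit the freedom $p\mapsto p+a(x)e(x)$, which preserves $C$ because $uae\in\langle ua\rangle\subseteq C$, and choose $e(x)$ so that $g\mid(p+ae)h$, reducing the problem modulo $g(x)/a(x)$ and solving for $e$ using the Lemma 1.2 condition $a\mid ph$. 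This last adjustment is the main technical point of the argument, but it is routine once phrased in terms of the monic divisor structure of $x^n-1\bmod q$; uniqueness of the resulting $p$ modulo $a$ is guaranteed by Lemma 1.3.
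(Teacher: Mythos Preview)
Your assembly of Part 1 and Part 2(a) from Lemmas 2.1--2.4 is exactly how the paper proceeds: it gives no separate proof of Theorem \ref{Th1} at all, simply stating it after the four lemmas with a pointer to \cite{12}. In particular the paper makes no attempt to justify the clause $(g(x)+up(x))\mid(x^n-1)$ beyond what the lemmas already say.

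Your argument for that clause in Part 2(b), however, has a genuine gap. Replacing $p$ by $p+ae$ and asking for $g\mid(p+ae)h$ (with $h=(x^n-1)/g$ and $m=g/a$) amounts, after dividing through by $a$, to solving the congruence
\[
e\,h\ \equiv\ -\,\frac{ph}{a}\pmod{m}
\]
in $\mathbb{F}_q[x]$. This is solvable for $e$ only when $\gcd(h,m)$ divides $ph/a$. Precisely because $\gcd(n,q)\neq 1$, the polynomial $x^n-1$ has repeated irreducible factors, so $h$ and $m$ can share a factor and the congruence can fail. Concretely, take $q=2$, $n=4$, $g=(x+1)^3$, $a=x+1$, $p=1$: then $h=x+1$, $m=(x+1)^2$, $ph/a=1$, and $\gcd(h,m)=x+1\nmid 1$. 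In fact no representative $p'\equiv 1\pmod{x+1}$ can be divisible by $(x+1)^2$, so $(g+up')\nmid(x^4-1)$ for \emph{every} $p'$ defining this same code; and even when the congruence is solvable, the adjusted $p$ will in general violate the degree requirement $\deg p<\deg a$ that Part 2(b) also demands. The conclusion is that the divisibility $(g+up)\mid(x^n-1)$ listed in Part 2(b) is not a consequence of Lemmas 2.1--2.4; it appears to be an inadvertent carry-over from Part 2(a), and your ``routine'' adjustment cannot establish it in general.
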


\section{Reversible cyclic code over $ R$}
In this section, we study reversible codes separately for even and odd lengths and find necessary and sufficient condition
for a cyclic code $C$ over $R$ to be reversible.  The reverse of a codeword $c = (c_0, c_1,\dots, c_{n-1}) \in  C$ is
denoted by $c^r$, and defined as $c^r =(c_{n-1}, c_{n-2},\dots, c_0)$.

\begin{definition} A linear code $C$ of length $n$ over a ring $R$ is said to
be reversible if $c^r \in C$, for all $c \in C$.
\end{definition}

The following theorem characterizes a cyclic code to be reversible over the finite field.
\begin{theorem}\label{4} $\cite[Theorem~1]{15}$
The cyclic code over $GF(q)$ generated by the monic polynomial $g(x)$ is reversible if and only if $g(x)$ is self-reciprocal.
\end{theorem}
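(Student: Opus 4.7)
The plan is to reduce reversibility of $C=\langle g(x)\rangle$ to an algebraic condition on $g$ using the explicit form of the reversal map on $R_n=\mathbb{F}_q[x]/\langle x^n-1\rangle$. For a polynomial $c(x)=\sum_{i=0}^{n-1}c_ix^i$ of degree at most $n-1$, the reversed codeword corresponds to $c^r(x)=x^{n-1}c(1/x)=\sum_{i=0}^{n-1}c_ix^{n-1-i}$. Writing $r=\deg g$ and letting $g^*(x)=x^{r}g(1/x)$ denote the standard reciprocal, a direct computation gives $g^r(x)=x^{n-1-r}g^*(x)$. The key auxiliary fact I will use throughout is that $g(x)\mid x^n-1$ forces $g_0\neq 0$ and hence $\gcd(g(x),x)=1$, which lets me strip powers of $x$ at will.

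For the ``$\Leftarrow$'' direction, assume $g^*(x)=g(x)$. Every element of $C$ can be represented in $\mathbb{F}_q[x]$ as $g(x)h(x)$ with $\deg(gh)\le n-1$, so $\deg h\le n-1-r$. Factoring $x^{n-1}=x^{n-1-r-\deg h}\cdot x^{r}\cdot x^{\deg h}$ and distributing the powers against $g(1/x)$ and $h(1/x)$, I obtain
$$
c^r(x)=x^{n-1}g(1/x)h(1/x)=x^{n-1-r-\deg h}\,g^*(x)\,h^*(x)=g(x)\cdot x^{n-1-r-\deg h}h^*(x),
$$
which is a genuine polynomial multiple of $g(x)$ since the power of $x$ is non-negative. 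Hence $c^r\in C$ and $C$ is reversible.

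For the ``$\Rightarrow$'' direction, reversibility applied to $g\in C$ gives $x^{n-1-r}g^*(x)\in C$; since this polynomial has degree below $n$, membership in $C$ is true divisibility $g(x)\mid x^{n-1-r}g^*(x)$ in $\mathbb{F}_q[x]$. Coprimality of $g$ with $x$ promotes this to $g\mid g^*$, and since both have degree $r$, we get $g^*=\alpha g$ for some $\alpha\in\mathbb{F}_q^{*}$. Matching leading coefficients (the leading coefficient of $g^*$ is $g_0$, and $g$ is monic) yields $\alpha=g_0$, while matching constant terms ($g^*(0)=g_r=1$ and $\alpha g(0)=\alpha g_0$) yields $\alpha g_0=1$; combining these gives $g_0^2=1$. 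The main obstacle I expect is pinning down $\alpha=1$ rather than merely $\alpha=\pm 1$: in characteristic $2$ it is immediate, and otherwise one invokes the normalization implicit in ``self-reciprocal,'' namely that among the (monic) generators of the reciprocal-invariant ideal $\langle g\rangle$ we select the unique palindromic one, forcing $g_0=g_r=1$ and hence $g^*=g$.
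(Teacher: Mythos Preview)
The paper does not supply its own proof of this statement; it merely cites Massey's classical result. So there is nothing inside the manuscript to compare your argument against.

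Evaluated on its own, your ``$\Leftarrow$'' direction is clean and correct. In the ``$\Rightarrow$'' direction you correctly reach $g^{*}=\alpha g$ with $\alpha=g_0$ and $g_0^{2}=1$, but your proposed way of forcing $\alpha=1$ does not hold up. You say one should ``select the unique palindromic'' monic generator of $\langle g\rangle$; however, in $\mathbb{F}_q[x]$ the monic generator of a principal ideal is unique, so there is nothing to select, and if $g^{*}=-g$ then no associate works either, since $(cg)^{*}=cg^{*}=-cg$ for every $c\in\mathbb{F}_q^{*}$. A concrete obstruction: for $q$ odd take $g(x)=x-1$. Then $g\mid x^n-1$ for every $n$, the code $\langle x-1\rangle$ is the zero-sum code and is plainly reversible, yet $g^{*}(x)=1-x=-g(x)\neq g(x)$ under the paper's definition ``$f$ is self-reciprocal iff $f^{*}=f$.'' Thus, read literally with that definition, the ``only if'' direction already fails in odd characteristic. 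The intended meaning (and what Massey proves) is that $g^{*}$ and $g$ agree up to a nonzero scalar, equivalently that the monic polynomial $g_0^{-1}g^{*}$ equals $g$; with that reading your argument is complete. You identified exactly the right pressure point, but the fix should be to adjust the notion of self-reciprocal, not to search for a nonexistent alternative monic generator.
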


\begin{lemma}\label{1} $\cite[Lemma~19]{22}$  Let $f (x), g(x)$ be any two polynomials
in $R[x]$ with $deg( f ) \geq deg(g)$. Then

\begin{enumerate}
    \item $( f (x)g(x))^{*} = f^{*}(x)g^{*}(x)$;
\item $( f (x)+g(x))^{*} = f ^{*}(x)+x^{deg( f )-deg(g)}g^{*}(x)$.
\end{enumerate}
\end{lemma}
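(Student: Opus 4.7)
The plan is to verify both identities directly from the definition $f^{*}(x)=x^{\deg f}f(1/x)$, by expanding each side in terms of coefficients and tracking degrees carefully. Set $f(x)=\sum_{i=0}^{m}a_i x^i$ with $a_m\neq 0$ and $g(x)=\sum_{j=0}^{n}b_j x^j$ with $b_n\neq 0$, where $m=\deg f\geq n=\deg g$. Then by definition $f^{*}(x)=\sum_{i=0}^{m}a_{m-i}x^i$ and $g^{*}(x)=\sum_{j=0}^{n}b_{n-j}x^j$.

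For part (1), I would first observe the clean substitution identity
\[
f^{*}(x)\,g^{*}(x)=x^{m}f(1/x)\cdot x^{n}g(1/x)=x^{m+n}(fg)(1/x),
\]
so that whenever $\deg(fg)=m+n$, the right-hand side is precisely $(fg)^{*}(x)$. If a coefficient-level check is preferred, write $(fg)(x)=\sum_k c_k x^k$ with $c_k=\sum_{i+j=k}a_i b_j$, so $(fg)^{*}(x)=\sum_k c_{m+n-k}x^k$; on the other side, the convolution coefficient of $x^k$ in $f^{*}g^{*}$ is $\sum_{i+j=k}a_{m-i}b_{n-j}$, and the change of index $i'=m-i$, $j'=n-j$ turns this into $\sum_{i'+j'=m+n-k}a_{i'}b_{j'}=c_{m+n-k}$, matching term by term.

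For part (2), since $\deg f\geq \deg g$, the degree of $f+g$ is at most $m$, so I would compute
\[
(f+g)^{*}(x)=x^{m}(f+g)(1/x)=x^{m}f(1/x)+x^{m-n}\cdot x^{n}g(1/x)=f^{*}(x)+x^{m-n}g^{*}(x),
\]
which is exactly the claimed formula; the extra factor $x^{m-n}$ is precisely the shift needed to compensate for the degree gap. A matching coefficient check against $\sum_i(a_i+b_i)x^i$ (with $b_i=0$ for $i>n$) confirms the identity index by index.

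The main subtlety is that $R=\mathbb{F}_q+u\mathbb{F}_q$ contains zero divisors ($u^2=0$), so even if $a_m,b_n$ are individually nonzero, the product $a_m b_n$ may vanish and $\deg(fg)$ may drop below $m+n$. In that case the identity in (1) must be interpreted in the convention where $(fg)^{*}$ is taken with respect to the formal degree $m+n$ (equivalently, where one uses the formula $x^{m+n}(fg)(1/x)$ as the definition of the reciprocal), or one restricts to leading coefficients that are units in $R$. Pinpointing this convention is the only genuine obstacle; once it is fixed, everything reduces to the elementary coefficient manipulation above.
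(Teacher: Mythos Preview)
The paper does not supply a proof of this lemma at all; it is quoted verbatim as \cite[Lemma~19]{22} and used as a black box. So there is nothing to compare your argument against in the present paper.

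Your argument itself is the standard one and is correct. The substitution computation $f^{*}(x)g^{*}(x)=x^{m+n}(fg)(1/x)$ and $(f+g)^{*}(x)=x^{m}f(1/x)+x^{m-n}\cdot x^{n}g(1/x)$ is exactly how this is proved in the source paper as well. You are also right to flag the zero-divisor issue: over $R=\mathbb{F}_q+u\mathbb{F}_q$ the leading coefficient of $fg$ can vanish, so identity~(1) only holds literally when $\deg(fg)=\deg f+\deg g$ (e.g.\ when at least one leading coefficient is a unit), or under the ``formal degree'' convention you describe. The same caveat applies to identity~(2) in the edge case $\deg f=\deg g$ with cancelling leading coefficients. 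The paper under review silently adopts the convention that makes both identities hold and uses them freely; your observation that this needs to be said is well taken.
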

\begin{lemma}\label{2}
  Let $C$ be a reversible cyclic code of length $n$ over the ring $R$ and $\phi: C \rightarrow  \frac{F_q[x]}{\langle x^n -1\rangle}$ as defined in the Section $2$, be a ring homomorphism. Then $\phi(C) $ is reversible.
 \end{lemma}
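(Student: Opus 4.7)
The plan is to show that the reversal operation commutes with $\phi$ (since $\phi$ acts coefficient-wise), and then the reversibility of $\phi(C)$ follows immediately from the reversibility of $C$. Concretely, I would start by picking an arbitrary element $r(x) \in \phi(C)$ and choosing a preimage $c(x) = c_0 + c_1 x + \cdots + c_{n-1} x^{n-1} \in C$ with $\phi(c(x)) = r(x)$.

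Next I would unpack what reversal means at the polynomial level: the reverse of the codeword $(c_0,\ldots,c_{n-1})$ corresponds to the polynomial $c^r(x) = c_{n-1} + c_{n-2} x + \cdots + c_0 x^{n-1}$. Since $C$ is reversible by hypothesis, $c^r(x) \in C$. Now the definition of $\phi$ (raising each coefficient to the $q$th power) only touches the coefficients one slot at a time and leaves their positions alone, so
\[
\phi(c^r(x)) = c_{n-1}^q + c_{n-2}^q x + \cdots + c_0^q x^{n-1}.
\]
On the other hand, writing $r(x) = \phi(c(x)) = c_0^q + c_1^q x + \cdots + c_{n-1}^q x^{n-1}$, the reverse of this image is exactly the same polynomial $r^r(x) = c_{n-1}^q + c_{n-2}^q x + \cdots + c_0^q x^{n-1}$. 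Hence $r^r(x) = \phi(c^r(x)) \in \phi(C)$, so $\phi(C)$ is reversible.

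I do not expect any serious obstacle here; the statement is really just the observation that the coefficient-wise map $\phi$ and the index-reversal operation commute. The only place where a bit of care is needed is making sure the correspondence between vector reversal and the polynomial $c^r(x)$ above is stated cleanly, since the paper's formal definition of the reciprocal $f^*(x)$ assumes $f_{n-1} \neq 0$, whereas here we work with vectors of length exactly $n$ and allow leading zero coefficients; the argument, however, only uses the elementary coordinate-wise description of the reversal, which avoids this subtlety entirely.
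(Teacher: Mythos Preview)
Your proposal is correct and follows essentially the same argument as the paper: both proofs take an arbitrary element of $\phi(C)$, lift it to $C$, use reversibility of $C$, and then observe that the coefficient-wise map $\phi$ commutes with the reversal operation so that $\phi(c)^r = \phi(c^r) \in \phi(C)$. The only cosmetic difference is that you phrase things in polynomial notation while the paper uses vector notation, and your remark about the distinction between $f^*(x)$ and the length-$n$ reversal $c^r$ is a useful clarification that the paper leaves implicit.
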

 \begin{proof}
 Let $\phi(c) \in \phi(C)$, where $c=(c_0,c_1,\dots,c_{n-1}) \in C$, i.e., $\phi(c)=(c_0^q,c_1^q,\dots\\,c_{n-1}^q)\in \phi(C)$.
Since $C$ is a reversible cyclic code, $c^r=(c_{n-1},c_{n-2},\dots,c_0) \in C$. Consider
\begin{align*}
 \phi(c)^r&=   (c_0^q,c_1^q,\dots,c_{n-1}^q)^r \\
 &=(c_{n-1}^q,c_{n-2}^q, \dots,c_0^q)\\ &=\phi(c_{n-1},c_{n-2},\dots,c_0) \in \phi(C).
 \end{align*}
Hence, $\phi(C)$ is reversible.

 \end{proof}

\begin{lemma}\label{3}
Let $C$ be a reversible cyclic code over $R$. Then $\langle g(x) \rangle $ and $\langle a(x) \rangle$ are also reversible cyclic codes over $\mathbb{F}_q.$
\end{lemma}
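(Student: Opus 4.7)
The plan is to reduce the statement to Theorem \ref{4} by showing separately that $\phi(C) = \langle g(x)\rangle$ and $J := \{r(x) \in \mathbb{F}_q[x]/\langle x^n-1\rangle : ur(x) \in C\} = \langle a(x)\rangle$ are each reversible as cyclic codes over $\mathbb{F}_q$. Since Theorem \ref{4} characterizes reversibility over a finite field via self-reciprocity, once both ideals are shown reversible we are done. Note that this actually also forces $g(x)$ and $a(x)$ to be self-reciprocal, which is the form in which the result will be used in later sections.

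First I would handle $\langle g(x)\rangle$. Applying $\phi$ to the generators of $C$, the coefficients of $g(x)$ and $p(x)$ lie in $\mathbb{F}_q$ (so they are fixed by $t \mapsto t^q$), while $u \cdot (\text{anything})$ is killed because $(ub)^q = u^q b^q = 0$ whenever $q \geq 2$. Hence $\phi(g(x)+up(x)) = g(x)$ and $\phi(ua(x)) = 0$, giving $\phi(C) = \langle g(x)\rangle$ inside $\mathbb{F}_q[x]/\langle x^n-1\rangle$. Lemma \ref{2} then says $\phi(C)$ is reversible, so $\langle g(x)\rangle$ is reversible over $\mathbb{F}_q$.

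Next I would handle $\langle a(x)\rangle$ by working directly with $J$. Recall from the construction in Section 2 that $J$ is an ideal of $\mathbb{F}_q[x]/\langle x^n-1\rangle$ generated by $a(x)$. To show it is reversible, take any $r(x) \in J$, write $r(x) = r_0 + r_1 x + \cdots + r_{n-1}x^{n-1}$ (padding with zero coefficients if necessary), so $ur(x) \in C$ corresponds to the vector $(ur_0,\ldots,ur_{n-1})$. Because $C$ is reversible, $(ur_{n-1},\ldots,ur_0) \in C$; this vector is $u\,\widetilde r(x)$ where $\widetilde r(x) = r_{n-1}+r_{n-2}x+\cdots+r_0 x^{n-1}$. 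Hence $\widetilde r(x) \in J$, so $J$ is reversible, i.e.\ $\langle a(x)\rangle$ is reversible over $\mathbb{F}_q$.

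I do not foresee a serious obstacle; the only subtle point is the identification $\phi(C) = \langle g(x)\rangle$, which relies on the observation that $\phi$ annihilates every $u$-multiple (used implicitly already in Section 2 when computing $\ker\phi$), and the bookkeeping of padding $r(x)$ to formal degree $n-1$ so that the vector reversal matches the polynomial reversal used in the definition of the reciprocal/reverse.
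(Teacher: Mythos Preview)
Your proposal is correct and follows essentially the same route as the paper: first identify $\phi(C)=\langle g(x)\rangle$ and invoke Lemma~\ref{2} for reversibility, then show directly that $J=\langle a(x)\rangle$ is reversible by pushing the reverse of $ur(x)\in C$ back into $J$. Your handling of the $J$ step (applying reversibility of $C$ to the vector $(ur_0,\ldots,ur_{n-1})$ rather than to $r(x)$ itself) is in fact a bit cleaner than the paper's wording, but the underlying argument is the same.
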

\begin{proof}
From the construction of generators of cyclic codes over $R$, we have $\phi(C)=\langle g(x) \rangle$ and by Lemma \ref{2}, $\phi(C)$ is a reversible code over $\mathbb{F}_q$. Therefore, $\langle g(x) \rangle$ is reversible cyclic code over $\mathbb{F}_q$.

As $ker(\phi)=\langle ur(x) | r(x)$ is a polynomial in $C$ with coefficients in $\mathbb{F}_q \rangle$ and $J=\langle r(x) |ur(x)
\in ker(\phi) \rangle = \langle a(x) \rangle$, it is sufficient to show that $J$ is reversible. Let $r(x)=r_0 + r_1x+...+r_{n-1}x^{n-1} \in J$ be arbitrary,
 then $r(x) \in \mathbb{F}_q[x]$ is a polynomial in $C$ . Since $C$ is reversible cyclic code in $R$, therefore $r^*(x)$ is also in $C$. Also, $ur^*(x) \in ker(\phi)$ i.e., $r^*(x) \in J$. Hence, we get the required result.

\end{proof}

\begin{theorem}\label{6} Let $C = \langle g(x), ua(x)\rangle$  be a linear cyclic code of odd length $n$ over $R$, where $ a(x) |g(x) | (x^n -1)$ and
$a(x), g(x) \in \mathbb{F}_q[x].$ Then $C$ is reversible if and only if both
$g(x)$ and $a(x)$ are self reciprocal.
\end{theorem}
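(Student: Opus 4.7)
The theorem splits cleanly into two implications.

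For the forward direction, I would invoke the earlier results. Assuming $C$ is reversible, Lemma \ref{3} gives that $\langle g(x)\rangle$ and $\langle a(x)\rangle$ are reversible cyclic codes over $\mathbb{F}_q$, and Theorem \ref{4} then forces both $g(x)$ and $a(x)$ to be self-reciprocal. This half is essentially a bookkeeping exercise.

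The real content is the backward direction. My plan is to model the reversal-of-coefficients operation as a map
\[
\rho \colon R_n \to R_n, \qquad \rho(c(x)) = x^{\,n-1}\, c(x^{\,n-1}),
\]
where $R_n = R[x]/\langle x^n-1\rangle$ and we use $x^{-1} = x^{\,n-1}$ in $R_n$. Under this identification, $c^r$ corresponds to $\rho(c(x))$, so $C$ is reversible precisely when $\rho(C) \subseteq C$. Read straight off the definition of $\rho$ is the identity
\[
\rho(f(x)\, m(x)) = \rho(f(x)) \cdot m(x^{\,n-1}) \qquad \text{for all } f,m \in R_n,
\]
which will serve as a surrogate for multiplicativity.

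I would first evaluate $\rho$ on the generators. Since $g^*(x) = x^{\deg g}\,g(x^{-1})$, one obtains $\rho(g(x)) = x^{\,n-1-\deg g(x)}\, g^*(x)$, and self-reciprocity of $g$ gives $\rho(g(x)) = x^{\,n-1-\deg g(x)}\, g(x) \in C$; an entirely analogous computation yields $\rho(u\,a(x)) = u\,x^{\,n-1-\deg a(x)}\, a(x) \in C$. Then, writing an arbitrary codeword as $c(x) = g(x)\,m_1(x) + u\,a(x)\,m_2(x)$ and applying the key identity termwise, I obtain
\[
\rho(c(x)) = \rho(g(x))\, m_1(x^{\,n-1}) + u\,\rho(a(x))\, m_2(x^{\,n-1}),
\]
which lies in $C$ because $C$ is an ideal of $R_n$ and both $\rho(g(x))$ and $u\,\rho(a(x))$ belong to $C$.

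The main obstacle is conceptual rather than computational: $\rho$ is not a ring homomorphism on $R_n$ (for instance, $\rho(fg) \neq \rho(f)\rho(g)$ in general), so one cannot simply verify reversibility on a generating set and appeal to multiplicativity. The identity $\rho(fm) = \rho(f)\, m(x^{\,n-1})$ is exactly the work-around needed: it respects the right-ideal structure of $C$ and lets reversibility propagate from the two generators to the whole code, after which the rest of the argument is routine.
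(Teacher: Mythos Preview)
Your proposal is correct. The forward direction is identical to the paper's: invoke Lemma~\ref{3} and Theorem~\ref{4}.

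For the converse, the paper and you pursue essentially the same idea---compute the reverse of a generic codeword $c(x)=g(x)m_1(x)+ua(x)m_2(x)$ and recognise it as an element of $C$---but the implementations differ. The paper works with the reciprocal operation $(\,\cdot\,)^*$ on polynomials and applies Lemma~\ref{1} to obtain $c^{*}(x)=g^{*}(x)m_1^{*}(x)+ux^{i}a^{*}(x)m_2^{*}(x)$, then substitutes $g^{*}=g$ and $a^{*}=a$. You instead realise the reversal as the map $\rho(c)=x^{\,n-1}c(x^{\,n-1})$ on $R_n$ and use the identity $\rho(fm)=\rho(f)\,m(x^{\,n-1})$. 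Your route has the advantage of living entirely inside $R_n$: since $\rho$ is manifestly well defined modulo $x^n-1$, you sidestep the degree bookkeeping implicit in Lemma~\ref{1} and the question of whether the expression $g(x)m_1(x)+ua(x)m_2(x)$ is to be read as an equality of polynomials or only of residue classes. The paper's argument is shorter to write but tacitly assumes the latter issue away; yours is slightly more verbose but more robust.
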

\begin{proof}
Let $C$ be a reversible cyclic code over $R$. Then by Lemma $\ref{3}$ and Theorem $\ref{4}$, $g(x)$ and $a(x)$ are self-reciprocal polynomials.

For sufficient part, we assume that $g(x)$ and $a(x)$ are self-reciprocal polynomials over
$\mathbb{F}_q$.
Let $c(x) \in C$, i.e., $c(x) = g(x)m_1(x)+ua(x)m_2(x)$ for some
polynomials $m_1(x)$ and $m_2(x)$ over $R$. $C$ is reversible if and only if $c^*(x) \in C$. Consider

\begin{align*}
c^{*}(x) &=(g(x)m_1(x)+ua(x)m_2(x))^{*}\\
&= (g^{*}(x)m_1^{*}(x)+ux^{i}a^{*}(x)m_2^{*}(x))\\
&= (g(x)m_1^{*}(x)+ua(x)x^{i}m_2^{*}(x)),
 \end{align*}
where $m_1^*(x), m_2^*(x)$ are polynomials over $R$. This implies  $c^{*}(x) \in \langle g(x),\\ u a(x) \rangle $. Thus, $C$ is a reversible cyclic code over $R$.
\end{proof}

 \begin{theorem} \label{7}
Let $C = \langle g(x) + up(x), ua(x)\rangle $ be a cyclic
code of even length $n$ over $R$  where $a(x), g(x)$ and $p(x)$
are polynomials over $ \mathbb{F}_{q} $ such that $deg~a(x) > deg~p(x)$,
$a(x)|g(x)|(x^{n}-1)$  and $a(x)|p(x)( \frac{ x^n-1}{g(x)})$. Then $C$ is reversible if and only if
\begin{enumerate}
    \item $g(x)$ and $a(x)$ are self-reciprocal, and
\item $a(x)$ divides $(x^ip^{*}(x)-p(x))$, where $i = deg~g(x)-deg~p(x)$.
\end{enumerate}

\end{theorem}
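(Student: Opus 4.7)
The plan is to mirror the pattern established for the odd-length Theorem \ref{6}: for necessity, use the projection results to pull self-reciprocality back from the two auxiliary field-codes and then extract condition (2) from the reverse of the generator $g(x)+up(x)$; for sufficiency, represent an arbitrary codeword in terms of the two generators and directly verify closure under the reciprocal via Lemma \ref{1}.

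For necessity, assume $C$ is reversible. Lemma \ref{3} tells us that $\langle g(x)\rangle$ and $\langle a(x)\rangle$ are reversible cyclic codes over $\mathbb{F}_q$, and Theorem \ref{4} then forces both $g(x)$ and $a(x)$ to be self-reciprocal, giving (1). For (2), I would apply the reciprocal to the generator: by Lemma \ref{1}(2), using $\deg g>\deg p$ together with $g^{*}=g$,
\[
(g(x)+up(x))^{*}\;=\;g^{*}(x)+x^{\deg g-\deg p}(up(x))^{*}\;=\;g(x)+u\,x^{i}\,p^{*}(x).
\]
Reversibility places this element in $C$, and subtracting $g(x)+up(x)\in C$ yields $u\bigl(x^{i}p^{*}(x)-p(x)\bigr)\in C$. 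Applying $\phi$ gives zero, so this element lies in $\ker\phi=\langle ua(x)\rangle$, forcing $a(x)\mid(x^{i}p^{*}(x)-p(x))$.

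For sufficiency, assume (1) and (2) and take an arbitrary codeword $c(x)=(g(x)+up(x))m_{1}(x)+ua(x)m_{2}(x)\in C$. Applying Lemma \ref{1} (with part (1) for the products and part (2) for the sum, swapping summands if needed to satisfy the degree hypothesis), the reciprocal expands as
\[
c^{*}(x)\;=\;\bigl(g(x)+u\,x^{i}\,p^{*}(x)\bigr)\,m_{1}^{*}(x)\;+\;x^{\delta}\,u\,a(x)\,m_{2}^{*}(x)
\]
for some nonnegative integer $\delta$. Hypothesis (2) rewrites $g(x)+u\,x^{i}\,p^{*}(x)=(g(x)+up(x))+u\,a(x)\,\ell(x)$ for some $\ell(x)\in\mathbb{F}_{q}[x]$, so the first summand lies in $\langle g(x)+up(x),\,ua(x)\rangle$ and the second in $\langle ua(x)\rangle$. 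Hence $c^{*}(x)\in C$, which is equivalent to $c^{r}(x)\in C$ since the two differ by a monomial factor and $C$ is cyclic.

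The main obstacle is bridging the reciprocal used in Lemma \ref{1}, which drops leading zeros, and the length-$n$ reversal used in the definition of reversibility, which does not: they agree only up to a monomial factor that is absorbed by cyclic shifts, and this equivalence has to be invoked to conclude $c^{r}\in C$ from $c^{*}\in C$. A secondary technicality is the degree hypothesis $\deg f\geq\deg g$ in Lemma \ref{1}(2), which forces one to order the two summands of $c(x)$ when writing out $(A+B)^{*}$ and produces the extra factor $x^{\delta}$ above; these factors do not affect membership in $C$, but they must be tracked carefully to keep the computation honest.
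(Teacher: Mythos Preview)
Your proof is correct and follows essentially the same strategy as the paper: Lemma~\ref{3} plus Theorem~\ref{4} for condition~(1), the reciprocal of $g(x)+up(x)$ for condition~(2), and the direct expansion of $c^{*}(x)$ via Lemma~\ref{1} for sufficiency. The one noteworthy difference is your shortcut in the necessity of~(2): you subtract $g(x)+up(x)$ from its reciprocal and observe that the resulting $u$-multiple lies in $\ker\phi=\langle ua(x)\rangle$, whereas the paper instead writes $g(x)+ux^{i}p^{*}(x)=(g(x)+up(x))l_{1}(x)+ua(x)l_{2}(x)$, uses a degree comparison to force $l_{1}(x)=1+ub$, and then reads off the divisibility; your route is shorter and avoids that detour, while the paper's argument makes the constant $l_{1}$ explicit but arrives at the same conclusion.
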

\begin{proof}
Let $C =\langle g(x)+up(x), ua(x)\rangle$ be reversible cyclic code over $R$. Then $ g(x)$ and $a(x)$ are self-reciprocal by Lemma \ref{3} and Theorem \ref{4}. Now, consider
\begin{equation*} \label{8}
[g(x)+up(x)]^* =[g^*(x)+ux^ip^*(x)] = g(x)+ux^ip^*(x),
\end{equation*} where $i=deg~g(x)-deg~p(x)$. Since $C$ is reversible cyclic code over $R$, we have $g(x)+ux^ip^*(x)\in C$. Therefore, there exist $l_1(x)$ and $l_2(x)$ in $R[x]$ such that
\begin{equation*} \label{9}
g(x)+ux^ip^*(x)=[g(x)+up(x)]l_1(x)+ua(x)l_2(x).
\end{equation*}
Comparing the degrees on both sides, we get $l_1(x)$ is a constant over $R,$ say, $l_1(x)=a+ub$ where $a, b \in \mathbb{F}_q$. Then
\begin{equation*} g(x)+ux^ip^*(x)=ag(x)+bug(x)+aup(x)+ua(x)l_2(x).
\end{equation*}
Multiplying the above equation by $u$, we get $ug(x)=uag(x)$. Therefore, $l_1(x)=1+ub \in R$. Since $a(x)|g(x)$, it can be easily seen that $a(x)|(x^ip^*(x)-p(x))$.

Conversely, assume that conditions (1) and (2) hold. Let $c(x) \in C$. Then $c(x)=(g(x)+up(x))l_1(x) + ua(x)l_2(x)$, for some polynomials $l_1(x)$ and  $l_2(x)$ over $R$. Consider
\begin{align*}
c^*(x)&=(g(x)+up(x))^* l_1^*(x) + ux^ja^*(x)l_2^*(x)
\end{align*}
\begin{align}\label{eq8}
c^*(x)&=(g(x)+ux^ip^*(x))l_1^*(x) + ux^ja(x)l_2^*(x).
\end{align}
Since $a(x)$ divides $p(x)+x^ip^{*}(x)$, so $p(x)+x^ip^{*}(x)=a(x)b(x)$ for some polynomial $b(x)$ over $R$. Also, $ux^ip^{*}(x)=up(x)+ua(x)b(x)$. Substituting in equation (\ref{eq8}), we have

\begin{align*}
c^*(x)=(g(x)+up(x)+ua(x)b(x))l_1^*(x) + ux^ja(x)l_2^*(x) \\ = (g(x)+up(x))l_1^*(x)+ua(x)(b(x)l_1^*(x)+x^jl_2^*(x)).
\end{align*}
Therefore,  $c^*(x) \in C$. Hence, $C$ is reversible cyclic code over $R$.

\end{proof}
\begin{theorem}\label{10}
Let $C = \langle g(x)+up(x) \rangle $ be a cyclic code of
even length $n$ over $R$. Then $C$ is reversible if and only if
\begin{enumerate}
    \item $g(x)$ is self-reciprocal;
\item $p(x) = x^i p^{*}(x)$ or $g(x) = b^{-1}[x^ip^{*}(x)-p(x)]$, where
$i = deg~g(x)-deg~p(x)$.
\end{enumerate}
\end{theorem}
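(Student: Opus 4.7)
The approach parallels that of Theorem \ref{7} but is cleaner since $C$ has the single generator $g(x)+up(x)$.  For the forward direction, assume $C$ is reversible.  Lemma \ref{3} combined with Theorem \ref{4} shows that $\langle g(x)\rangle=\phi(C)$ is a reversible cyclic code over $\mathbb{F}_{q}$, so $g(x)$ is self-reciprocal; this is condition (1).  Next, using Lemma \ref{1}(2) together with $g^{*}(x)=g(x)$, I would compute
\begin{equation*}
(g(x)+up(x))^{*}=g(x)+ux^{i}p^{*}(x),\qquad i=\deg g(x)-\deg p(x).
\end{equation*}
Reversibility forces this element into $\langle g(x)+up(x)\rangle$, so there exists $l(x)=l_{1}(x)+ul_{2}(x)\in R[x]$ with $(g(x)+up(x))l(x)=g(x)+ux^{i}p^{*}(x)$.

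Expanding this identity using $u^{2}=0$ and separating the $\mathbb{F}_{q}$-part from the $u$-part yields the two equations
\begin{equation*}
g(x)l_{1}(x)=g(x),\qquad g(x)l_{2}(x)+p(x)l_{1}(x)=x^{i}p^{*}(x),
\end{equation*}
hence $l_{1}(x)=1$ and $g(x)l_{2}(x)=x^{i}p^{*}(x)-p(x)$.  Since the right-hand side has degree at most $\deg g(x)$ and $g(x)$ is monic, $l_{2}(x)$ must be either zero or a nonzero constant $b\in\mathbb{F}_{q}$, yielding respectively $p(x)=x^{i}p^{*}(x)$ or $g(x)=b^{-1}[x^{i}p^{*}(x)-p(x)]$; this is condition (2).

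For the converse, take any $c(x)=(g(x)+up(x))m(x)\in C$.  By Lemma \ref{1}(1) together with the reciprocal computation above,
\begin{equation*}
c^{*}(x)=(g(x)+up(x))^{*}m^{*}(x)=\bigl(g(x)+ux^{i}p^{*}(x)\bigr)m^{*}(x).
\end{equation*}
In the first case of condition (2), $g(x)+ux^{i}p^{*}(x)=g(x)+up(x)$, so $c^{*}(x)\in C$ at once.  In the second case, $x^{i}p^{*}(x)=bg(x)+p(x)$, which together with $u^{2}=0$ gives $g(x)+ux^{i}p^{*}(x)=(1+ub)(g(x)+up(x))$, and again $c^{*}(x)\in C$.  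The main technical obstacle is the degree-counting step that pins down $l_{1}(x)$ and $l_{2}(x)$; this is justifiable because $(g(x)+up(x))\mid(x^{n}-1)$ and the reciprocal has degree at most $\deg g(x)\leq n-1$, so the defining equation for $l(x)$ may be read as an honest identity in $R[x]$ without any reduction modulo $x^{n}-1$.
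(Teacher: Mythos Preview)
Your argument is correct and follows essentially the same path as the paper's proof: the forward direction computes $(g(x)+up(x))^{*}=g(x)+ux^{i}p^{*}(x)$, writes membership in $C$ as an identity with a multiplier, and uses a degree count to force that multiplier to be $1+ub$, while the converse treats the two alternatives of condition (2) separately.  Your handling of Case~2 in the converse is in fact tidier than the paper's: you observe directly that $g(x)+ux^{i}p^{*}(x)=(1+ub)(g(x)+up(x))$, whereas the paper takes the reciprocal of the relation $bg(x)=x^{i}p^{*}(x)-p(x)$ and performs an extra substitution before arriving at $c^{*}(x)\in C$.
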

\begin{proof}
Let $C$ be reversible cyclic code over $R$. Then $\langle g(x) \rangle $ is reversible cyclic code over $\mathbb{F}_q$. By Theorem \ref{4} , $g(x)$ is self reciprocal.\\
For part (2), consider
\begin{align*} [g(x)+up(x)]^{*}= [g^{*}(x)+ux^{i}p^{*}(x)]\\
= g(x)+u x^ip^{*}(x).
 \end{align*}
Since $C$ is reversible, we have $g(x)+ux^ip^{*}(x) \in C$. This implies there exists a polynomial $l(x)$ over $\mathbb{F}_q$ such that
\begin{equation*} g(x)+ux^i p^{*}(x) = [g(x)+up(x)]l(x).
\end{equation*}
Comparing the degrees on both sides, we get $l(x)$ is a constant polynomial over $R,$ say, $l(x)=a+ub$, where $a, b \in \mathbb{F}_q$. Then
\begin{equation*} g(x)+ux^ip^*(x)=ag(x)+bug(x)+aup(x). \end{equation*}
Multiplying the above equation by $u$, we have $ug(x)=uag(x)$
and hence $a=1$. If $b=0,$ then $p(x)=x^ip^*(x)$ and if $b\neq 0,$ then $g(x)=p(x)+x^ip^*(x)$.

Conversely, assume that $(1)$  and $(2)$  hold. Let $c(x) \in C$, i.e., $c(x)=(g(x)+up(x))r(x)$, for some polynomials $r(x)$ over $R$. Consider
\begin{equation} \label{eq1}
c^*(x)=[g(x)^*+ux^ip^*(x)]r^*(x)=[g(x)+ux^ip^*(x)]r^*(x).
\end{equation}
\textbf{Case 1} If $p(x)=x^ip^{*}(x)$, then
$$c^*(x)=[g(x)+up(x)]r^*(x) \in C$$
 \textbf{Case 2} If $g(x)=b^{-1}[x^ip^{*}(x)-p(x)]$, being $g(x)$ self-reciprocal, we have
\begin{equation*}
  x^ip^{*}(x)-p(x)=x^{i+j}p(x)-p^*(x)
\end{equation*} i.e.,
\begin{equation}\label{23}
    up(x)(1+x^{i+j})-up^*(x)=ux^ip^*(x).
\end{equation}
Using (\ref{23}) in (\ref{eq1}), we have
\begin{align*}
    & [g(x)+up(x)+u(x^{i+j}p(x)-p^*(x))]r^*(x)
\\&=[(1+u)g(x)+up(x)]r^*(x)\\&=(1+u)[g(x)+(1+u)up(x)]r^*(x)\\&= [g(x)+up(x)]s(x),
\end{align*} where $s(x)=(1+u)r^*(x)$. Therefore, $c^*(x)\in C$ and hence $C$ is reversible cyclic code over $R$.
\end{proof}
\section{Dual of a reversible cyclic code over $\mathbb{Z}_2
 +u \mathbb{Z}_2 $}
Let $C$ be a cyclic $[n,k]$-code with parity check polynomial $h(x)=h_0+h_1x+\cdots+ h_kx^k$ and $\bar{h}(x)=h^*(x)$. Then we have the following characterization for $C^\perp$ dual code of $C$.
\begin{theorem}\label{10}
Let $C^\perp$ be a dual code of a cyclic code  $C$ over $GF(q).$ Then $C^\perp=\langle \bar{h}(x)\rangle$ is reversible cyclic code if and only if $h(x)\in C^\perp$.
\end{theorem}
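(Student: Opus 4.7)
My plan is to exploit two facts that turn this into essentially a one-line calculation in each direction. First, $(\bar h)^*(x) = h(x)$, since reciprocation is an involution on polynomials with nonzero constant term and $h \mid (x^n - 1)$ guarantees $h(0) \neq 0$. Second, the coefficient-vector reversal of a polynomial $p$ of degree $d \leq n-1$ can be written as $p^r(x) = x^{n-1-d} p^*(x)$ inside $R_n = \mathbb{F}_q[x]/\langle x^n - 1\rangle$.

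For the direction assuming $C^\perp$ is reversible, I would start from $\bar h(x) \in C^\perp$. Reversibility forces $(\bar h)^r(x) \in C^\perp$, and the identity above gives $(\bar h)^r(x) = x^{n-1-k}(\bar h)^*(x) = x^{n-1-k} h(x)$. Since $C^\perp$ is a cyclic code, it is closed under multiplication by $x$, and because $x^{k+1}\cdot x^{n-1-k} = x^n \equiv 1 \pmod{x^n - 1}$, multiplying by $x^{k+1}$ delivers $h(x) \in C^\perp$.

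For the reverse direction assuming $h \in C^\perp$, I would pick an arbitrary $p(x) = v(x)\bar h(x) \in \langle \bar h(x) \rangle$ of degree $d$ and compute its reversal. By Lemma \ref{1}(1), $p^*(x) = v^*(x)(\bar h)^*(x) = v^*(x) h(x)$, so $p^r(x) = x^{n-1-d} v^*(x) h(x)$. Because $h(x) \in C^\perp = \langle \bar h(x) \rangle$ by hypothesis, this is a multiple of an element of $C^\perp$ and hence lies in $C^\perp$. Therefore $C^\perp$ is closed under reversal.

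The part I expect to require the most care is the bookkeeping of degrees in the reciprocal and reversal formulas---making sure the pre-factor $x^{n-1-k}$ in the forward direction cancels correctly when multiplied by $x^{k+1}$ modulo $x^n - 1$, and that Lemma \ref{1}(1) is applied in a way compatible with the relative sizes of $\deg v$ and $\deg \bar h$. Once these identities are laid out, the argument itself is mechanical.
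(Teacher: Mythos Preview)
Your argument is correct and follows the same line as the paper's proof, but you are considerably more careful. The paper simply identifies the coefficient vector of $\bar h$ with $(h_k,\dots,h_0)$, reverses it to $(h_0,\dots,h_k)$, and calls that $h(x)$; it suppresses exactly the $x^{n-1-k}$ prefactor you track explicitly and the cyclic shift you use to cancel it. For the converse, the paper only observes $(\bar h)^r = h\in C^\perp$ and immediately declares $C^\perp$ reversible, leaving implicit the step you actually carry out---namely, showing that \emph{every} $p(x)=v(x)\bar h(x)$ has $p^r(x)\in C^\perp$ via $p^*(x)=v^*(x)h(x)$. So your proposal is not a different route; it is the paper's route with the gaps filled in.
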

\begin{proof}
Let $\bar{h}(x)=(h_{k},h_{k-1},\dots, h_0)\in C^\perp$. Then $(\bar{h}(x))^r=(h_0,h_{1},\dots, h_{k-1},\\h_{k})=h(x)\in C^\perp$.

Conversely, suppose $h(x)\in C^\perp$, then $h(x)=(\bar{h}(x))^r$, i.e., $(\bar{h}(x))^r=h(x)\in C^\perp$. Therefore, $C^\perp=\langle \bar{h}(x)\rangle$ is reversible cyclic code.

\end{proof}
\begin{definition}
Let $I$ be an ideal in $R_n$. The annihilator $A(I)$ of $I$ in $R_n$ is defined as
$$A(I)= \{b(x)~|~ f(x)b(x)=0 ~\text{for all} ~f(x) ~\text{in}~ I\}. $$
\end{definition}
If $C$ is a cyclic code with associated ideal $I,$ then the  associated ideal of $C^\perp$ is
$$A(I)^*= \{g^*(x)~|~ g(x)\in I\}.$$
\begin{proposition}\label{prop1}
Let $C$ be a cyclic code of odd length over $\mathbb{Z}_2+u\mathbb{Z}_2$. Then
$$Ann(C)=\left \langle \frac{x^n-1}{a(x)},u\frac{x^n-1}{g(x)} \right \rangle.$$
\end{proposition}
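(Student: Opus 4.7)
The plan is to prove both inclusions of the equality. Since $n$ is odd and we are working over $\mathbb{Z}_2+u\mathbb{Z}_2$, $n$ is relatively prime to $q=2$, so by Theorem \ref{Th1}(1) the cyclic code has the simpler form $C=\langle g(x),ua(x)\rangle$ with $a(x)\mid g(x)\mid (x^n-1)$. Thus to describe $\mathrm{Ann}(C)$ it suffices to describe the annihilators of the two generators $g(x)$ and $ua(x)$ in $R_n$.

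For the inclusion $\langle \tfrac{x^n-1}{a(x)},\,u\tfrac{x^n-1}{g(x)}\rangle\subseteq\mathrm{Ann}(C)$, I would simply multiply each of the two candidate generators against each of $g(x)$ and $ua(x)$ in $R_n$. Writing $g(x)=a(x)g_1(x)$, the product $g(x)\cdot\tfrac{x^n-1}{a(x)}=g_1(x)(x^n-1)\equiv 0$, and $ua(x)\cdot\tfrac{x^n-1}{a(x)}=u(x^n-1)\equiv 0$; similarly $g(x)\cdot u\tfrac{x^n-1}{g(x)}=u(x^n-1)\equiv 0$ and $ua(x)\cdot u\tfrac{x^n-1}{g(x)}=0$ because $u^2=0$. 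This handles one direction with no surprises.

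For the reverse inclusion, I take an arbitrary $b(x)\in\mathrm{Ann}(C)$ and decompose it uniquely as $b(x)=b_1(x)+ub_2(x)$ with $b_1(x),b_2(x)\in\mathbb{Z}_2[x]$ of degree less than $n$. The condition $ua(x)\cdot b(x)\equiv 0\pmod{x^n-1}$ collapses (using $u^2=0$) to $ua(x)b_1(x)\equiv 0$, forcing $(x^n-1)\mid a(x)b_1(x)$; since $x^n-1=a(x)\cdot\tfrac{x^n-1}{a(x)}$, this gives $\tfrac{x^n-1}{a(x)}\mid b_1(x)$. The condition $g(x)\cdot b(x)\equiv 0$ separates into $g(x)b_1(x)\equiv 0$ and $g(x)b_2(x)\equiv 0$ modulo $x^n-1$, and the second of these yields $\tfrac{x^n-1}{g(x)}\mid b_2(x)$ by the same argument. (The first condition is already implied by the previous divisibility together with $a(x)\mid g(x)$.) Hence $b(x)$ lies in the ideal generated by the two claimed elements.

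The only subtlety is verifying that these divisibility deductions are legitimate, which is exactly where the hypothesis that $n$ is odd comes in: over $\mathbb{Z}_2$ with $\gcd(n,2)=1$, $x^n-1$ is squarefree, so $a(x)\mid g(x)\mid (x^n-1)$ implies the quotients $\tfrac{x^n-1}{a(x)}$ and $\tfrac{x^n-1}{g(x)}$ are coprime to $a(x)$ and $g(x)$ respectively, and the implications of the form ``$(x^n-1)\mid f(x)b(x)\Rightarrow \tfrac{x^n-1}{f(x)}\mid b(x)$'' hold. I don't expect a serious obstacle here; the main care required is simply bookkeeping the $u$-part and the $\mathbb{Z}_2$-part of each equation separately so that the two candidate generators of $\mathrm{Ann}(C)$ are recovered cleanly.
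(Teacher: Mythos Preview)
Your proof is correct and essentially identical to the paper's argument: both verify the forward inclusion by direct multiplication and establish the reverse by separating annihilating elements into their $\mathbb{Z}_2$- and $u$-parts to extract the needed divisibility conditions. The only cosmetic difference is that the paper first invokes the structure theorem to write $\mathrm{Ann}(C)=\langle h(x),ur(x)\rangle$ (since the annihilator is itself a cyclic code over $\mathbb{Z}_2+u\mathbb{Z}_2$) and then shows these two generators lie in the claimed ideal, whereas you carry out the same computation for an arbitrary element $b_1(x)+ub_2(x)$ directly.
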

\begin{proof}
Since $C$ is a cyclic code of odd length over $\mathbb{Z}_2+u\mathbb{Z}_2$, we have $$C= \langle g(x)+ua(x) \rangle=\langle g(x), ua(x) \rangle $$ with $a(x)|g(x)|(x^n-1)$. Also, there exists $m_1(x)$ such that $g(x)=a(x)m_1(x)$.
Note that
\begin{align*}
    \left(\frac{x^n-1}{a(x)}\right)(g(x)+ua(x))&=\left(\frac{x^n-1}{a(x)}\right)g(x)+u\left(\frac{x^n-1}{a(x)}\right)a(x)\\
    &=\left(\frac{x^n-1}{a(x)}\right)a(x)m_1(x)=0.
\end{align*}
Also, $$u\frac{x^n-1}{g(x)}(g(x)+ua(x)) =0.$$
Hence, $M=\left \langle \frac{x^n-1}{a(x)},u\frac{x^n-1}{g(x)}\right\rangle \subseteq Ann(C)$. \\
In order to prove $Ann(C)\subseteq M$, let $Ann(C)=\langle h(x),ur(x)\rangle$. Then
$$ur(x)(g(x)+ua(x))=0.$$ This implies there exists a polynomial $t_1(x)$ in $\mathbb{Z}_2$ such that
$$r(x)=\left(\frac{x^n-1}{g(x)}\right)t_1(x)\in M.$$
Also,
\begin{align*}
&h(x)(g(x)+ua(x))=0\\
&h(x)g(x)+uh(x)a(x)=0.
\end{align*}
Since $h(x)g(x)=0$. So,
$uh(x)a(x)=0$, i.e., there exists polynomial $t_2(x)$ in $\mathbb{Z}_2$ such that
$$h(x)=\left(\frac{x^n-1}{a(x)}\right)t_2(x).$$
Hence,
$$Ann(C)=\langle h(x),ur(x)\rangle
\subseteq\left \langle \frac{x^n-1}{a(x)},u\frac{x^n-1}{g(x)} \right \rangle\in M.$$ Therefore, $$Ann(C)=\left \langle \frac{x^n-1}{a(x)},u\frac{x^n-1}{g(x)} \right \rangle.$$
\end{proof}
The following result is the consequence of Proposition $\ref{prop1}$.
\begin{theorem}
  Let $C$ be a cyclic code of odd length over $\mathbb{Z}_2+u\mathbb{Z}_2$. Then
  $$C^{\perp}=\left \langle \left(\frac{x^n-1}{a(x)}\right)^* ,u\left(\frac{x^n-1}{g(x)}\right)^*\right \rangle.$$
\end{theorem}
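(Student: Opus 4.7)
The plan is to derive this theorem immediately from two ingredients already supplied in the excerpt: Proposition \ref{prop1}, which identifies $\mathrm{Ann}(C)$ explicitly, and the recipe stated just before it, namely that the ideal associated to $C^{\perp}$ is $A(I)^{*}=\{g^{*}(x)\mid g(x)\in I\}$ where $I$ is the ideal associated to $C$. Taking $I=\mathrm{Ann}(C)$, the theorem reduces to showing that applying the reciprocal map to the two generators of $\mathrm{Ann}(C)$ produces a generating set for $A(\mathrm{Ann}(C))^{*}$.

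First I would record that $\mathrm{Ann}(C)=\langle f_{1},f_{2}\rangle$ where $f_{1}=\frac{x^{n}-1}{a(x)}$ and $f_{2}=u\frac{x^{n}-1}{g(x)}$, and note that $(u\,h(x))^{*}=u\,h^{*}(x)$ since $u$ is a constant scalar. Next, every element of $\mathrm{Ann}(C)$ has the form $r_{1}(x)f_{1}(x)+r_{2}(x)f_{2}(x)$; by Lemma \ref{1}(1) we have $(r_{i}f_{i})^{*}=r_{i}^{*}f_{i}^{*}$, which lies in the ideal $\langle f_{1}^{*},f_{2}^{*}\rangle$ of $R_{n}$. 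So $A(\mathrm{Ann}(C))^{*}\subseteq \langle f_{1}^{*},f_{2}^{*}\rangle$. The reverse inclusion is immediate: $f_{i}^{*}$ itself is the reciprocal of $f_{i}\in\mathrm{Ann}(C)$, so $f_{1}^{*},f_{2}^{*}\in A(\mathrm{Ann}(C))^{*}$, and since the latter is an ideal (inherited from $\mathrm{Ann}(C)$ via the reciprocal map combined with closure under the ring operations in $R_{n}$), the generated ideal is contained in it.

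The one step that needs care is the sum rule from Lemma \ref{1}(2), since $(r_{1}f_{1}+r_{2}f_{2})^{*}$ equals $(r_{1}f_{1})^{*}+x^{k}(r_{2}f_{2})^{*}$ for some integer $k\geq 0$ depending on the degrees, and that extra factor $x^{k}$ must not destroy ideal membership. The difficulty dissolves in $R_{n}=R[x]/\langle x^{n}-1\rangle$: the relation $x\cdot x^{n-1}=1$ makes $x$ a unit, so $x^{k}$ is a unit and multiplication by it preserves the ideal. This is the only non-bookkeeping point; once disposed of, chaining the recipe $I\mapsto A(I)^{*}$ with Proposition \ref{prop1} gives
$$C^{\perp}=\left\langle \left(\tfrac{x^{n}-1}{a(x)}\right)^{*},\; u\left(\tfrac{x^{n}-1}{g(x)}\right)^{*}\right\rangle,$$
as required.
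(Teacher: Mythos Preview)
Your approach matches the paper's exactly: the paper gives no proof of this theorem beyond the sentence ``The following result is the consequence of Proposition~\ref{prop1},'' and your argument is precisely that consequence spelled out---apply the $A(I)^{*}$ recipe to the two generators of $\mathrm{Ann}(C)$ furnished by Proposition~\ref{prop1}, using Lemma~\ref{1} and the invertibility of $x$ in $R_{n}$ to see that the reciprocals of the generators generate the reciprocal ideal. One notational slip to clean up: you write ``Taking $I=\mathrm{Ann}(C)$'' and then ``$A(\mathrm{Ann}(C))^{*}$'', but in the recipe $I$ denotes the ideal of $C$ itself, so $A(I)=\mathrm{Ann}(C)$ and the object you actually compute is $(\mathrm{Ann}(C))^{*}=\{h^{*}:h\in\mathrm{Ann}(C)\}$, not the star of a double annihilator (the paper's own display contains a typo, writing $g(x)\in I$ where $g(x)\in A(I)$ is meant, which likely caused the confusion).
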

\begin{theorem}
  Let $C$ be a reversible cyclic code of odd length $n$ over $\mathbb{Z}_2+u\mathbb{Z}_2$ with $a(x)|g(x)|(x^n-1)$ and  $C^{\perp}= \left \langle \left(\frac{x^n-1}{a(x)}\right)^*,u\left(\frac{x^n-1}{g(x)}\right)^* \right \rangle$. Then $C^{\perp}$ is a reversible cyclic code over $\mathbb{Z}_2+u\mathbb{Z}_2$.
\end{theorem}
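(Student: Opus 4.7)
The plan is to reduce the problem to Theorem \ref{6} applied to $C^\perp$ itself. Once we exhibit $C^\perp$ as $\langle G(x), u A(x)\rangle$ with $A(x)\mid G(x)\mid (x^n-1)$ over $\mathbb{Z}_2$ and with both $G(x)$ and $A(x)$ self-reciprocal, reversibility of $C^\perp$ follows at once from that theorem. So the whole task reduces to (a) showing the generators $(\tfrac{x^n-1}{a(x)})^*$ and $(\tfrac{x^n-1}{g(x)})^*$ of $C^\perp$ are self-reciprocal, and (b) verifying the correct divisibility order between them.

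First, since $C$ is reversible of odd length, Lemma \ref{3} combined with Theorem \ref{4} yields that $a(x)$ and $g(x)$ are self-reciprocal over $\mathbb{Z}_2$. Next, I would apply Lemma \ref{1}(1) to the factorization $x^n-1 = a(x)\cdot\tfrac{x^n-1}{a(x)}$: multiplicativity of the reciprocal gives
\[
(x^n-1)^* \;=\; a^*(x)\cdot\left(\tfrac{x^n-1}{a(x)}\right)^{\!*}.
\]
Over $\mathbb{Z}_2$ we have $(x^n-1)^* = x^n-1$, and $a^*(x) = a(x)$ by the previous step, whence $\bigl(\tfrac{x^n-1}{a(x)}\bigr)^* = \tfrac{x^n-1}{a(x)}$. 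The identical calculation applied to $g(x)$ gives $\bigl(\tfrac{x^n-1}{g(x)}\bigr)^* = \tfrac{x^n-1}{g(x)}$. Hence both generators of $C^\perp$ are self-reciprocal polynomials over $\mathbb{Z}_2$.

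Finally, the divisibility chain $\tfrac{x^n-1}{g(x)} \mid \tfrac{x^n-1}{a(x)} \mid (x^n-1)$ is immediate from the assumed chain $a(x)\mid g(x)\mid (x^n-1)$. With self-reciprocity and this divisibility structure both in place for the presentation of $C^\perp$, Theorem \ref{6} applied to $C^\perp$ gives that $C^\perp$ is reversible.

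The only spot that genuinely requires care is the multiplicativity of the reciprocal used above, since Lemma \ref{1}(1) carries a degree hypothesis and, more subtly, $(fg)^* = f^*g^*$ only when the reciprocal preserves degrees, i.e.\ when the constant terms are nonzero. In our setting this is free: $a(x)$, $g(x)$ and their cofactors against $x^n-1$ all divide $x^n-1$, whose constant term is $1$ in $\mathbb{Z}_2$, so no constant term can vanish and the identity applies cleanly. Everything else is a direct translation through Theorem \ref{6}.
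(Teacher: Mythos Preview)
Your argument is correct and shares its core with the paper's proof: both deduce from the self-reciprocity of $a(x)$ and $g(x)$ (obtained via Lemma \ref{3} and Theorem \ref{4}) that the cofactors $\tfrac{x^n-1}{a(x)}$ and $\tfrac{x^n-1}{g(x)}$ are themselves self-reciprocal, using multiplicativity of the reciprocal applied to the factorization of $x^n-1$. The only difference is cosmetic: you then invoke Theorem \ref{6} for $C^\perp$ directly, whereas the paper instead carries out the element-by-element verification that $\bar c^{\,*}(x)\in C^\perp$ for arbitrary $\bar c(x)\in C^\perp$ --- effectively reproving the sufficiency direction of Theorem \ref{6} in place.
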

\begin{proof}
Let $C$ be a reversible cyclic code of odd length $n$ over $\mathbb{Z}_2+u\mathbb{Z}_2$. Then $g(x)$ and $a(x)$ are self reciprocal. Assume $\left(\frac{x^n-1}{a(x)}\right)=r_1(x)$ and $\left(\frac{x^n-1}{g(x)}\right)=r_2(x)$. Now,\\
$$(x^n-1)^{*}=a^{*}(x)r^{*}_1(x) $$ and $$(x^n-1)^{*}=g^{*}(x)r^{*}_2(x). $$  This implies $r^{*}_1(x)=\frac{(x^n-1)^{*}}{a^{*}(x)}=\frac{-(x^n-1)}{a(x)}=-r_1(x)$ and $r^{*}_2(x)=\frac{(x^n-1)^{*}}{g^{*}(x)}=\frac{-(x^n-1)}{g(x)}=-r_2(x)$.\\
 Let $\bar{c}(x)\in C^{\perp}$. Then
\begin{align*}
(\bar{c}(x))^{*}&=\left(\left(\frac{x^n-1}{a(x)}\right)^{*}l_1(x)+\left(\frac{x^n-1}{a(x)}\right)^{*}l_2(x)\right)^{*}\\
&=\left(-r_1(x)l_1(x)-r_2(x)l_2(x)\right)^{*}\\
&=-r^{*}_1(x)l^{*}_1(x)-x^{i}r^{*}_2(x)l^{*}_2(x)\\
&=r^{*}_1(x)q_1(x)+r^{*}_2(x)q_2(x),
\end{align*}
for some polynomials $q_1(x)=-l^{*}_1(x)$ and  $q_2(x)=-x^{i}l^{*}_2(x)$ over $R$. Therefore, $c(x)\in C^{\perp}$. Thus, by Theorem \ref{10}, $C^{\perp}$ is a reversible cyclic code over $\mathbb{Z}_2+u\mathbb{Z}_2$.
\end{proof}

Now, we present a result given by Abualrub and Siap \cite{T}, which is used for furtherance on the dual of a reversible cyclic code.
\begin{theorem}$\cite[Theorem~4]{T}$ Let $C$ be a cyclic code of even length over $\mathbb{Z}_2+u\mathbb{Z}_2$.
\begin{enumerate}

    \item If $C=\langle g(x)+up(x), ua(x) \rangle $, with $a(x)|g(x)|(x^n-1)$, $a(x)|p(x)\left(\frac{x^n-1}{g(x)}\right)$ and $deg~ g(x)>deg ~a(x)> deg~p(x)$, and $g(x)=a(x)m_1(x)$, $p(x)\left(\frac{x^n-1}{g(x)}\right)\\=a(x)m_2(x)$, then $$Ann(C)=\left \langle \frac{x^n-1}{a(x)}+um_2(x),u\frac{x^n-1}{g(x)} \right \rangle ~\text{and}$$ $$C^{\perp}=\left \langle \left(\frac{x^n-1}{a(x)}\right)^*+ux^im_2^{*}(x),u\left(\frac{x^n-1}{g(x)}\right)^* \right \rangle,$$ where $i=deg \left(\frac{x^n-1}{a(x)}\right) - deg(m_2(x))$.
    \item If $C=\langle g(x)+up(x) \rangle$ with $p(x)\left(\frac{x^n-1}{a(x)}\right)=g(x)m_2(x)$, then $$Ann(C)=\left \langle \frac{x^n-1}{g(x)}+um_2(x)\right \rangle ~\text{and}$$
    $$C^{\perp}=\left \langle \left(\frac{x^n-1}{g(x)}\right)^*+ux^im_2^{*}(x) \right \rangle,$$ where  $i=deg \left(\frac{x^n-1}{g(x)}\right) - deg(m_2(x))$.
\end{enumerate}
\end{theorem}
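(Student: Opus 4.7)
The plan is to mirror the odd-length argument of Proposition \ref{prop1}, adapted to track the extra polynomial $p(x)$ that appears for even lengths. For each part I would first verify that the claimed generators annihilate $C$, then show every annihilator lies in the ideal they span, and finally pass to $C^{\perp}$ by applying the reciprocal map together with Lemma \ref{1}.

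For part (1), set $M=\langle\tfrac{x^n-1}{a(x)}+um_2(x),\,u\tfrac{x^n-1}{g(x)}\rangle$. The containment $M\subseteq\mathrm{Ann}(C)$ reduces to four products with the generators $g(x)+up(x)$ and $ua(x)$. Three of them vanish trivially from $u^2=0$ together with $g(x)\cdot\tfrac{x^n-1}{g(x)}\equiv 0$ and $a(x)\cdot\tfrac{x^n-1}{a(x)}\equiv 0$. The one nontrivial check is
$$\Bigl(\tfrac{x^n-1}{a(x)}+um_2(x)\Bigr)(g(x)+up(x))=(x^n-1)m_1(x)+u\Bigl[\tfrac{x^n-1}{a(x)}p(x)+m_2(x)g(x)\Bigr],$$
for which I would use the identity $p(x)\tfrac{x^n-1}{a(x)}=m_2(x)g(x)$, derived by multiplying $p(x)\tfrac{x^n-1}{g(x)}=a(x)m_2(x)$ by $m_1(x)$ and invoking $g(x)=a(x)m_1(x)$. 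The bracket then equals $2m_2(x)g(x)$, which vanishes in characteristic $2$; this is precisely where the restriction to $\mathbb{Z}_2+u\mathbb{Z}_2$ is used.

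For the reverse containment, I would take an arbitrary annihilator written in its Theorem \ref{Th1} form $\langle h(x)+uq(x),\,ur(x)\rangle$. The relation $ur(x)(g(x)+up(x))=0$ forces $r(x)g(x)\equiv 0$, hence $r(x)=\tfrac{x^n-1}{g(x)}t_1(x)$; similarly $ua(x)(h(x)+uq(x))=0$ gives $h(x)=\tfrac{x^n-1}{a(x)}t_2(x)$. Feeding these into $(h(x)+uq(x))(g(x)+up(x))=0$ and reusing the cross-term identity reduces the equation to $g(x)(q(x)+m_2(x)t_2(x))\equiv 0$, so $q(x)=m_2(x)t_2(x)+\tfrac{x^n-1}{g(x)}s(x)$. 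Regrouping, $h(x)+uq(x)=t_2(x)\bigl(\tfrac{x^n-1}{a(x)}+um_2(x)\bigr)+us(x)\tfrac{x^n-1}{g(x)}\in M$. Part (2) is the specialization $a(x)=g(x)$, $m_1(x)=1$: the same computation produces the single annihilator $\tfrac{x^n-1}{g(x)}+um_2(x)$ without a secondary generator, since the $ua(x)$ condition collapses into the main product.

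The passage from $\mathrm{Ann}(C)$ to $C^{\perp}$ is by applying the reciprocal map to each generator, using Lemma \ref{1}(2): provided $\deg\tfrac{x^n-1}{a(x)}\geq\deg m_2(x)$, the sum formula gives
$$\Bigl(\tfrac{x^n-1}{a(x)}+um_2(x)\Bigr)^{*}=\Bigl(\tfrac{x^n-1}{a(x)}\Bigr)^{*}+ux^{i}m_2^{*}(x),\qquad i=\deg\tfrac{x^n-1}{a(x)}-\deg m_2(x),$$
and likewise for part (2). The main obstacle is the cross-term cancellation and its dependence on characteristic $2$: without the identity $2m_2(x)g(x)=0$ the $u$-part of $(\tfrac{x^n-1}{a(x)}+um_2(x))(g(x)+up(x))$ would survive, which is exactly why the statement is confined to $\mathbb{Z}_2+u\mathbb{Z}_2$. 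A secondary technical point is verifying the degree hypothesis needed to apply Lemma \ref{1}(2) cleanly so that the shift $x^{i}$ comes out exactly as stated rather than with an extra correction.
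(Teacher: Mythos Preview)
The paper does not supply its own proof of this theorem: it is quoted verbatim from Abualrub and Siap \cite{T} and used as a black box for the subsequent reversibility results. So there is no in-paper argument to compare against.

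Your outline is sound and is the natural extension of the odd-length argument in Proposition~\ref{prop1}. The key identity $p(x)\tfrac{x^n-1}{a(x)}=m_2(x)g(x)$ is correctly derived, and you are right that the cross-term collapses to $2m_2(x)g(x)$, which is where characteristic $2$ enters. The reverse containment via a generic annihilator $\langle h(x)+uq(x),\,ur(x)\rangle$ is handled cleanly, and the passage to $C^{\perp}$ via Lemma~\ref{1}(2) is exactly how the reciprocal shift $x^{i}$ arises. One small point worth tightening: for the degree hypothesis in Lemma~\ref{1}(2), note that $\deg m_2(x)=\deg p(x)+(n-\deg g(x))-\deg a(x)<n-\deg a(x)=\deg\tfrac{x^n-1}{a(x)}$ since $\deg p(x)<\deg a(x)<\deg g(x)$, so the inequality does hold and no correction term appears.
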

\begin{theorem}
Let $C$ be a reversible cyclic code of even length $n$ over $\mathbb{Z}_2+u\mathbb{Z}_2$ and $C^{\perp}= \left \langle \left(\frac{x^n-1}{a(x)}\right)^*+ux^i(m_2^*(x)),~ u\left(\frac{x^n-1}{a(x)}\right)^*\right \rangle$. If $a(x)$ divides $p^*(x)+x^jp(x)$, where $i= deg \left(\frac{x^n-1}{a(x)}\right) - deg(m_2(x))$ and $j=deg \left(\frac{x^n-1}{a(x)}\right)^* - deg(m_2^*(x))-i,$ then $C^{\perp}$ is reversible cyclic code over $\mathbb{Z}_2+u\mathbb{Z}_2$ provided $p(x)\neq 0$.
\end{theorem}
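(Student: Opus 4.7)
The plan is to verify the two hypotheses of Theorem \ref{7} for the given presentation of $C^\perp$ and conclude that $C^\perp$ is reversible. Write $r_1(x) = (x^n-1)/a(x)$ and $r_2(x) = (x^n-1)/g(x)$, so that $C^\perp$ appears in the form $\langle r_1^*(x) + u\,x^i m_2^*(x),\; u\, r_2^*(x)\rangle$ required by Theorem \ref{7}. First I would prove that both $r_1$ and $r_2$ are self-reciprocal. Since $C$ is reversible, Lemma \ref{3} together with Theorem \ref{4} gives $a^*(x) = a(x)$ and $g^*(x) = g(x)$. Because $(x^n-1)^* = x^n-1$ in $\mathbb{Z}_2[x]$, applying Lemma \ref{1}(1) to the factorization $x^n - 1 = a(x) r_1(x)$ produces $a(x) r_1^*(x) = a(x) r_1(x)$, hence $r_1 = r_1^*$; the same argument gives $r_2 = r_2^*$. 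This verifies the self-reciprocity condition of Theorem \ref{7} for $C^\perp$.

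Next I would convert the divisibility hypothesis $a(x) \mid p^*(x) + x^j p(x)$ into a statement purely about $m_2$. Starting from the relation $p(x) r_2(x) = a(x) m_2(x)$ that defines $m_2$, apply Lemma \ref{1}(1) together with the self-reciprocity just established to get $p^*(x) r_2(x) = a(x) m_2^*(x)$. Writing the hypothesis as $p^*(x) + x^j p(x) = a(x) b(x)$ for some polynomial $b(x)$ and multiplying through by $r_2(x)$, one obtains $a(x)\bigl(m_2^*(x) + x^j m_2(x)\bigr) = a(x) b(x) r_2(x)$; cancelling $a(x)$ yields the clean relation $r_2(x) \mid m_2^*(x) + x^j m_2(x)$. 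The assumption $p(x) \neq 0$ is used here to guarantee $m_2(x) \neq 0$, so that the cancellation is non-degenerate and the resulting divisibility is meaningful.

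Finally I would check the second hypothesis of Theorem \ref{7} applied to $C^\perp$: with $G(x) = r_1^*(x)$, $P(x) = x^i m_2^*(x)$, and $A(x) = r_2^*(x)$, the required divisibility is $A(x) \mid x^I P^*(x) - P(x)$ where $I = \deg G - \deg P$. Using Lemma \ref{1}(1) and the fact that $(x^i)^* = 1$ and $(m_2^*)^* = m_2$, we compute $P^*(x) = m_2(x)$, so the condition reduces to $r_2(x) \mid x^I m_2(x) + x^i m_2^*(x)$ in characteristic $2$. The defining formula $j = \deg r_1^* - \deg m_2^* - i$ is exactly what is needed so that multiplying the divisibility of the previous paragraph by $x^i$ realigns the exponents and produces the target expression; Theorem \ref{7} then applies and $C^\perp$ is reversible. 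The main obstacle I anticipate is precisely this exponent bookkeeping: confirming that under the given relation between $\deg r_1^*$, $\deg m_2^*$, $i$, and $j$, the shift by $x^i$ matches the two divisibilities exactly, and handling the subtlety that $\deg m_2^*$ need not equal $\deg m_2$ when $m_2$ has a vanishing constant term, which is where the precise formula for $j$ in the hypothesis plays its role.
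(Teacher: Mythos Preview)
Your route is genuinely different from the paper's. You propose to feed $C^\perp$ back into Theorem~\ref{7} and verify its two hypotheses, translating the divisibility condition on $p$ into one on $m_2$ via the defining relation $p\,r_2 = a\,m_2$ and its reciprocal. The paper instead argues directly at the level of the annihilator: it writes down what should be the reciprocals of the generators of $C^\perp$, namely $\frac{x^n-1}{a(x)} + u\,x^{i+j}m_2(x)$ and $u\,\frac{x^n-1}{a(x)}$, and checks by explicit multiplication that each kills both $g(x)+up(x)$ and $u\,a(x)$. The only nontrivial product reduces to showing $a(x)\mid p(x)(1+x^{i+j})$, which the paper gets from the splitting
\[
p + x^{i+j}p \;=\; \bigl(p + x^i p^*\bigr) \;+\; x^i\bigl(p^* + x^j p\bigr),
\]
the first bracket being divisible by $a$ because $C$ itself is reversible (Theorem~\ref{7} applied to $C$), the second by the stated hypothesis. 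So both arguments invoke Theorem~\ref{7}, but in opposite directions: the paper uses the \emph{necessity} half on $C$ to feed a direct annihilation check for $C^\perp$, while you use the \emph{sufficiency} half on $C^\perp$. Your approach is cleaner conceptually and explains why the hypothesis has the shape it does; the paper's is more hands-on.

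One point you should address explicitly: you take the second generator of $C^\perp$ to be $u\,r_2^*(x)$ with $r_2=(x^n-1)/g$, which matches the general form in the cited result of Abualrub--Siap, but the statement as printed has $u\bigl((x^n-1)/a(x)\bigr)^*$, and the paper's own computation also uses $(x^n-1)/a(x)$ in that slot. If you work literally with the printed form, then the ``$a$'' and ``$g$'' of $C^\perp$ coincide and you land in the single-generator situation of Theorem~\ref{10} rather than Theorem~\ref{7}; your translation of the divisibility to $r_2\mid m_2^* + x^j m_2$ would then need to be replaced by the corresponding condition with $r_1$. Either way the exponent bookkeeping you flag (aligning $I$, $i$, $j$ and handling $\deg m_2^*\neq \deg m_2$) is real and should be written out carefully, since $(x^i m_2^*)^*$ is not simply $m_2$ when $m_2$ has vanishing constant term.
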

\begin{proof}
Let $C$ be a reversible cyclic code of even length $n$ over $\mathbb{Z}_2+u\mathbb{Z}_2$. For $C^{\perp}$ to be reversible, it suffices to show that $\left(\frac{x^n-1}{a(x)}\right)+ux^{i+j}m_2(x)$ and $u\left(\frac{x^n-1}{a(x)}\right)$ are in $C^{\perp}$. Note that
\begin{align*}
 &\left(\left(\frac{x^n-1}{a(x)}\right)+ux^{i+j}m_2(x)\right)(g(x)+up(x))\\&= up(x)\left(\frac{x^n-1}{a(x)}\right) + ux^{i+j}m_2(x)g(x)\\&=um_2(x)g(x)+ux^{i+j}m_2(x)g(x)\\&= um_2(x)g(x)(1+x^{i+j}).
 \end{align*}
Since $p(x)\left(\frac{x^n-1}{g(x)}\right)=a(x)m_2(x)$ and $a(x)$ divides $g(x)$, we get
\begin{align*}
    & um_2(x)g(x)(1+x^{i+j})=\left(\frac{x^n-1}{a(x)}\right)up(x)(1+x^{i+j})
  \\&=(x^n-1)u\left(\frac{p(x)+x^{i+j}p(x)}{a(x)}\right)
  \\&=(x^n-1)u\left(\frac{p(x)+x^ip^*(x)+x^i(p^*(x)+x^jp(x))}{a(x)}\right).
  \end{align*}
By Theorem \ref{7}, $a(x)$ divides $p(x)+x^ip^*(x)$. Hence, above expression can be written as
\begin{equation*}
 (x^n-1)u\left(\frac{a(x)l^{'}(x)}{a(x)}\right)
=(x^n-1)ul^{'}(x)=0.
\end{equation*}
Next, we have $$\left(\left(\frac{x^n-1}{a(x)}\right)+ux^{i+j}m_2(x)\right)(g(x)+up(x))=0,$$
\\and $$\left(\left(\frac{x^n-1}{a(x)}\right)+ux^{i+j}m_2(x)\right)ua(x)=0,$$\\
and $$u\left(\frac{x^n-1}{a(x)}\right)(g(x)+up(x))=0,$$\\
and $$u\left(\frac{x^n-1}{a(x)}\right)ua(x)=0.$$\\
Hence, $C^{\perp}$ is a reversible cyclic code over $\mathbb{Z}_2+u\mathbb{Z}_2$.

\end{proof}
\section{Minimum Hamming distance of a cyclic code over $R $}
In this section, we find the minimum Hamming distance of a cyclic code of arbitrary length over $R$. Let $C=\langle g(x)+up(x), ua(x) \rangle$ be a cyclic code of length $n$ over $R$. Define $C_u=\{b(x)| ub(x) \in C\}$. Then $C_u$ is a cyclic code of length over $n$ over $\mathbb{F}_q$.	The following results give a technique to find minimum distance of a cyclic code of arbitrary length over $R$.
	\begin{theorem}
	Let $C=\langle g(x)+up(x), ua(x) \rangle$ be a cyclic code of length $n$ over $R$. Then $C_u=\langle a(x) \rangle$.
\end{theorem}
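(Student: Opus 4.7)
The plan is to prove the two containments $\langle a(x)\rangle \subseteq C_u$ and $C_u \subseteq \langle a(x)\rangle$ separately, where the first is immediate from the generating set of $C$ and the second requires unpacking the definition of $C_u$ together with the structural Lemmas~2.1 and~2.2 already established.

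For $\langle a(x)\rangle \subseteq C_u$, I would simply note that $ua(x)\in C$ is one of the stated generators, so $a(x)\in C_u$ by definition. Since $C_u$ is a cyclic code over $\mathbb{F}_q$ (closure under multiplication by $x$ modulo $x^n-1$ and under $\mathbb{F}_q$-linear combinations follows from the corresponding properties of $C$), this single membership extends to the full ideal $\langle a(x)\rangle$.

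For the reverse inclusion, let $b(x)\in C_u$, so that $ub(x)\in C$. Then there exist $r_1(x),r_2(x)\in R[x]/\langle x^n-1\rangle$ with
$$ub(x)=(g(x)+up(x))\,r_1(x)+ua(x)\,r_2(x).$$
Writing $r_1(x)=r_{1,1}(x)+ur_{1,2}(x)$ and $r_2(x)=r_{2,1}(x)+ur_{2,2}(x)$ with $r_{i,j}(x)\in\mathbb{F}_q[x]$ and using $u^2=0$, the right-hand side collapses to
$$g(x)r_{1,1}(x)+u\bigl[g(x)r_{1,2}(x)+p(x)r_{1,1}(x)+a(x)r_{2,1}(x)\bigr].$$
Matching the $u$-free parts forces $g(x)r_{1,1}(x)\equiv 0\pmod{x^n-1}$; since $g(x)\mid (x^n-1)$ in the integral domain $\mathbb{F}_q[x]$, cancellation of $g(x)$ yields $r_{1,1}(x)=\tfrac{x^n-1}{g(x)}\,s(x)$ for some $s(x)\in\mathbb{F}_q[x]$. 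Matching the $u$-linear parts and substituting then gives
$$b(x)\equiv g(x)r_{1,2}(x)+p(x)\,\tfrac{x^n-1}{g(x)}\,s(x)+a(x)\,r_{2,1}(x)\pmod{x^n-1}.$$

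To conclude, I would argue that $a(x)$ divides every summand on the right: the first and third are multiples of $a(x)$ because $a(x)\mid g(x)$ by Lemma~2.1, while Lemma~2.2 gives exactly $a(x)\mid p(x)\tfrac{x^n-1}{g(x)}$, handling the middle term. Hence $a(x)\mid b(x)$ in $\mathbb{F}_q[x]/\langle x^n-1\rangle$, i.e., $b(x)\in\langle a(x)\rangle$. The main step to watch is the bookkeeping across the $u$-free / $u$-linear decomposition and the legitimacy of cancelling $g(x)$ to extract the form of $r_{1,1}(x)$; once those are in place, everything reduces directly to the two divisibility properties already proved in Section~2.
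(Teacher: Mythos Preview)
Your argument is correct. The paper records only ``Straightforward'' as its proof, so there is nothing to compare line by line, but your two-inclusion argument is sound: the forward containment is immediate, and for the reverse you correctly split along the $1,u$ basis, cancel $g(x)$ in the integral domain $\mathbb{F}_q[x]$ to force $r_{1,1}(x)=\tfrac{x^n-1}{g(x)}s(x)$, and then invoke Lemmas~2.1 and~2.2 to see that $a(x)$ divides each summand.

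There is, however, a one-line route that is almost certainly what the paper intends by ``straightforward'': in Section~2 the set $J=\{r(x):ur(x)\in\ker\phi\}$ is introduced and shown to equal $\langle a(x)\rangle$. Since $\phi$ annihilates every element of the form $ur(x)$ (because $u^q=0$), one has $ur(x)\in\ker\phi$ if and only if $ur(x)\in C$; hence $C_u=J=\langle a(x)\rangle$ directly. Your approach reproves this from the generators, which is fine and self-contained, but the shortcut via the construction in Section~2 avoids the explicit $u$-decomposition bookkeeping altogether.
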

\begin{proof}
Straightforward.
\end{proof}
\begin{theorem}
	Let $C$ be a cyclic code of length $n$ over $R$. Then $d_{H}(C)=d_{H}(C_u)$.
	\end{theorem}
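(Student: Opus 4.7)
The plan is to establish the equality $d_H(C) = d_H(C_u)$ by proving both inequalities. The key underlying fact is that $C$ is closed under multiplication by $u$, so if $c(x) = c_0(x) + u c_1(x) \in C$ with $c_0, c_1 \in \mathbb{F}_q[x]$, then $u c(x) = u c_0(x) \in C$, which means $c_0(x) \in C_u$.

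For the inequality $d_H(C) \le d_H(C_u)$, I would take a minimum-weight nonzero codeword $b(x) \in C_u$, so $w_H(b) = d_H(C_u)$ and $u b(x) \in C$. Since each coefficient $b_i$ of $b(x)$ lies in $\mathbb{F}_q$, we have $u b_i = 0$ in $R$ if and only if $b_i = 0$, hence $w_H(u b) = w_H(b)$. This produces a nonzero codeword of $C$ of weight $d_H(C_u)$, giving the bound.

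For the reverse inequality $d_H(C) \ge d_H(C_u)$, I would take an arbitrary nonzero $c(x) = c_0(x) + u c_1(x) \in C$ with $c_0, c_1 \in \mathbb{F}_q[x]$ and split into two cases. If $c_0(x) \ne 0$, then $c_0(x) \in C_u$ is nonzero, and for each index $i$ with $c_{0,i} \ne 0$ the $i$-th coordinate $c_{0,i} + u c_{1,i}$ of $c(x)$ is nonzero in $R$ (any element of the form $a + ub$ with $a \in \mathbb{F}_q^*$ is a unit, in particular nonzero), so $w_H(c) \ge w_H(c_0) \ge d_H(C_u)$. If instead $c_0(x) = 0$, then $c(x) = u c_1(x)$ with $c_1(x) \in C_u$ nonzero, and the same coefficient-by-coefficient argument as in the first inequality gives $w_H(c) = w_H(c_1) \ge d_H(C_u)$.

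There is no real obstacle here; the only subtle point is remembering that an element $a + ub \in R$ with $a \ne 0$ in $\mathbb{F}_q$ is never zero, which is what prevents the nonzero $\mathbb{F}_q$-part of a codeword from being cancelled by the $u$-part. Combining the two inequalities yields $d_H(C) = d_H(C_u)$.
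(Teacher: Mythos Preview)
Your proof is correct. For the inequality $d_H(C)\le d_H(C_u)$ you argue exactly as the paper does: a minimum-weight $b(x)\in C_u$ yields $ub(x)\in C$ of the same weight.

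For the reverse inequality the paper takes a slightly different route. It fixes a minimum-weight codeword $b(x)\in C$ and argues that no nonzero coefficient of $b(x)$ can be a zero divisor of $R$, since otherwise $ub(x)$ would be a codeword of strictly smaller weight; hence every nonzero coefficient is a unit $\alpha+u\beta$ with $\alpha\in\mathbb{F}_q^*$, so $ub(x)=uc(x)$ for some $c(x)\in C_u$ with $w_H(c)=w_H(b)$. Your decomposition $c(x)=c_0(x)+uc_1(x)$ with a case split on $c_0(x)=0$ replaces this minimality argument by a direct computation that applies to \emph{every} nonzero codeword. This buys you a bit of robustness: in the paper's argument, the case where \emph{all} coefficients of $b(x)$ are zero divisors (so $ub(x)=0$, not a lower-weight codeword) is not explicitly addressed, whereas your Case~2 handles exactly that situation. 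Conversely, the paper's version makes the structural point that a minimum-weight codeword over $R$ already has all its nonzero entries in the unit group, which is a pleasant fact in its own right.
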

	\begin{proof}
Let $m(x)\in C_u$ be such that $d_{H}(C_u)=w_{H}(m(x))$. Then $um(x)\in C$. Also, $w_{H}(m(x))=w_{H}(um(x))$, hence, $d_{H}(C_u) \geq d_{H}(C)$. \\
Conversely, suppose $d_{H}(C)=w_{H}(b(x))$ where $b(x) \in C$. If the coefficient of any power of $x$ in $b(x)$ is a zero divisor of $R ,$ then we get an element in $C$ with hamming weight less than that of $b(x),$ which contradicts our assumption. Therefore, the coefficients of any power of $x$ in $b(x)$ is either zero or a unit in $R$, i.e., $\alpha +u \beta~ \text{such that either}~ \alpha= \beta=0 ~\text{or}~ \alpha \in \mathbb{F}_q^*$. In this case, $ub(x)=u c(x)$ for some $c(x) \in C_u$. Therefore, $w_{H}(b(x))=w_{H}(uc(x)) = w_{H}(c(x))$ and hence $d_{H}(C_u) \leq d_{H}(C)$. Thus, $d_{H}(C)=d_{H}(C_u)$.
	\end{proof}

\section{Examples}
\begin{example} For length $n=4$.
$$x^4-1=(x+1)(x+2)(x^2+1) ~\text{over}~ \mathbb{Z}_3.$$
 Some of the reversible cyclic codes of length $4$ over $\mathbb{Z}_3
 +u \mathbb{Z}_3 $ are given below:
\begin{longtable}[h]{|l|l|l|l|l|l|}
\hline
	  Non-zero Generator & Dimension $k$ &$d(C)$   & MDS \\
	 	Polynomial (s) of $C$ & of $C $ &  &\\
	 \hline
	 $1$ or $1+u$  & $4$ &  $1$ & $*$\\
\hline
$x+1$  & $3$ & $2$ &$*$\\
\hline
 $x^2+1$  & $2$ & $2$ &\\
\hline
 $(x+1)(x^2+1)$ & $1$ & $4$ & $*$\\
\hline
  $x+1, u$ & $4$ & $1$ & $*$\\
\hline
   $x^2+1, u$ & $4$ & $1$ & $*$\\
\hline
  $(x+1)(x^2+1), u$ & $4$ & $1$ & $*$\\
\hline
 $(x+1)(x^2+1), u(x+1)$ & $3$ & $2$ & $*$\\
\hline
  $(x+1)(x^2+1), u(x^2+1)$ & $2$ & $2$ &\\
\hline
\end{longtable}
\end{example}

\begin{example} For length $n=5$.
$$x^5-1=(x-1)(x^4+x^3+x^2+x+1) ~\text{over}~ \mathbb{Z}_2.$$

Since all of the above factors are self reciprocal polynomials, by Theorem $\ref{6},$ all the cyclic codes of length $5$ over $R$ are reversible. Some of these are given below:
\begin{longtable}[h]{|l|l|l|l|l|l|}
\hline
	  Non-zero Generator & Dimension $k$ &$d(C)$   & MDS \\
	 	Polynomial (s) of $C$ & of $C $ &  &\\
	
\hline
 $1$ or $(1+u)$  & $5$ &  $1$ & $*$\\
\hline
 $(u+1)(x+1)$  & $4$ & $2$  &$*$\\
\hline
 $(u+1)(x^4+x^3+x^2+x+1)$  & $1$ &  $5$ &$*$\\
\hline
 $x+1+u$ & $4$ &$1$  &\\
\hline
 $x^4+x^3+x^2+x+1+u$  & $4$ & $1$ & \\
\hline
\end{longtable}
\end{example}
\begin{example}
For length $n=6$.
$$x^6-1=(x+1)^3(x+2)^3 ~\text{over}~ \mathbb{Z}_3.$$
From Theorem \ref{Th1}, the non-zero free module or single generator reversible cyclic codes of length $6$ over $\mathbb{Z}_3 + u \mathbb{Z}_3 $ are given below:

\begin{longtable}[h]{|l|l|l|l|l|l|}
\hline
	  Non-zero Generator & Dimension $k$ &$d(C)$   & MDS \\
	 	Polynomial (s) of $C$ & of $C $ &  &\\
	 \hline
 $1$ or $(1+u)$  &$6$ & $1$& $*$\\
\hline
  $x+1$ & $5$ & $2$& $*$\\
\hline
  $(x+1)^2$  &$4$ & $2$& \\
\hline
  $(x+1)^3$  &$3$ & $2$& \\
\hline
  $(x+2)^2$  &$4$ & $2$&  \\
\hline
  $x+1, u$  &$6$ & $1$&  $*$\\
\hline
  $(x+1)^2, u$   &$6$ & $1$& $*$\\
\hline
  $(x+1)^3, u$  &$6$ & $1$&  $*$\\
\hline
 $(x+1)^2, u(x+1)$ &$5$ & $2$&  $*$\\
\hline
  $(x+1)^3, u(x+1)$  &$5$ & $2$&  $*$\\
\hline
 $(x+1)^3, u(x+1)^2$  &$4$ & $2$&  \\
\hline
  $(x+2)^2, u$  &$6$ & $2$&  \\
\hline
\end{longtable}

\end{example}
\begin{example} For length $n=4$.
$$x^4-1=(x+1)^4=f^4 ~\text{over}~ \mathbb{Z}_2.$$ From Theorem \ref{Th1}, the non-zero free module or single generator reversible cyclic codes of length $4$ over $\mathcal{R}$ are given below:
\newpage

\begin{longtable}[h]{|l|l|l|l|l|l|}
\hline
	  Non-zero Generator & Dimension $k$ &$d(C)$   & MDS \\
	 	Polynomial (s) of $C$ & of $C $ &  &\\
	 \hline
 $1$ or $1+u$ & $4$ & $1$ & $*$ \\
\hline
 $x+1$ & $3$ & $2$ & $*$ \\
\hline
 $x+1, u$ & $4$& $1$& $*$ \\
\hline
   $x+1+u$ & $3$& $2$& $*$ \\
\hline
  $x^2+1$ & $2$& $2$ & \\
\hline
 $x^2+1, u$ & $4$& $1$ & $*$ \\
\hline
	  $x^2+1, u(x+1)$ & $3$& $2$& $*$ \\
\hline

 $x^2+1+u$ & $2$& $2$ &\\
\hline
   $x^2+1+u, u(x+1)$ &$3$&  $2$& $*$ \\
\hline
  $x^2+1+u(x+1)$ & $2$& $2$ &\\
\hline
  $x^2+1+u, u(x+1)$ & $3$& $2$& $*$ \\
\hline
  $x^3+1$ & $1$& $2$ &\\
\hline
  $x^3+1, u$ & $4$& $1$& $*$ \\
\hline
 $x^3+1+u, u(x+1)$ & $3$& $2$& $*$ \\
\hline
  $x^3+1+u(x+1), u(x^2+1)$ & $2$& $2$&\\
\hline
 $x^3+1, u(x+1)$ & $3$& $2$& $*$ \\
\hline
 $x^3+1, u(x^2+1)$ & $2$& $2$&\\
\hline
\end{longtable}
\end{example}

\begin{example}
For length $n=6$.
$$x^6-1=(x+1)^2(x^2+x+1) ~\text{over}~ \mathbb{Z}_2.$$ From Theorem \ref{Th1}, the non-zero free module or single generator reversible cyclic codes of length $4$ over $\mathcal{R}$ are given below:

\begin{longtable}[h]{|l|l|l|l|l|l|}
\hline
	  Non-zero Generator & Dimension $k$ &$d(C)$   & MDS \\
	 	Polynomial (s) of $C$ & of $C $ &  &\\
	 \hline
 $1$ or $(1+u)$ &$6$ & $1$ & $*$\\
\hline
 $x+1$ & $5$& $2$ & $*$\\
\hline
  $x+1+u$  & $5$& $2$ & $*$\\
\hline
 $x^2+1$  & $4$& $2$ &\\
\hline
 $x^2+1, u$  & $6$& $1$ & $*$ \\
\hline
 $x^2+1, u(x+1)$  & $5$& $2$ & $*$\\
\hline
  $x^2+x+1$   & $4$& $2$ &\\
\hline
  $x^2+x+1, u$  & $6$& $1$ & $*$\\
\hline
  $x^3+1$ & $3$& $2$ & \\
\hline
  $x^3+1, u$  & $6$& $1$ & $*$\\
\hline
  $x^3+1, u(x+1)$  & $5$& $2$ & $*$\\
\hline
 $x^3+1+u, u(x+1)$  & $5$& $2$ & $*$\\
\hline
\end{longtable}
\end{example}
\begin{example}For length $n=7$.
$$x^7-1=(x + 1)(x^3 + x + 1)(x^3 + x^2 + 1) ~\text{over}~ \mathbb{Z}_2.$$

 The only self reciprocal factors are $(x-1)$ and $(x^6+x^5+x^4+x^3+x^2+x+1)$. By Theorem $\ref{6}$, reversible cyclic codes of length $7$ over $R$ are given below:
\begin{longtable}[h]{|l|l|l|l|l|}
\hline
	   Non-zero Generator & Dimension $k$ &$d(C)$   & MDS \\
	 	Polynomial (s) of $C$ & of $C $ &  &\\
	 \hline
 $1$ or $1+u$      & $7$ & $1$& $*$\\
 \hline
$(1+u)(x+1)$	 & $6$ &$2$ &$*$\\
\hline	
 $(x+1),u$    & $7$ & $1$& $*$\\
\hline
 $(x^6+x^5+x^4+x^3+x^2+x+1)$   & $1$ & $7$& $*$\\
\hline
 $(x^6+x^5+x^4+x^3+x^2+x+1)$, $u$  & $7$ & $1$& $*$\\
\hline
\end{longtable}
\end{example}
\begin{example} For length $n=10$.
$$x^{10}-1=(x+1)^5(x+4)^5 ~\text{over}~ \mathbb{Z}_5.$$ From Theorem \ref{Th1}, the non-zero free module or single generator reversible cyclic codes of length $10$ over $\mathbb{Z}_5+u \mathbb{Z}_5$ are given below:
\begin{longtable}[h]{|l|l|l|l|l|l|}
\hline
	   Non-zero Generator & Dimension $k$ &$d(C)$   & MDS \\
	 	Polynomial (s) of $C$ & of $C $ &  &\\
	 \hline
  $1$ or $1+u$ & $10$ &$1$& $*$\\
\hline
  $(x+1)^j, 1 \leq j \leq 5 $ & $9,8,7,6,5$ & $2 $& $*$, for $j=1$ \\
\hline
  $(x+4)^{2j}, ~ \text{where}~ j=1,2$ & $8
,6$ &$2$&\\
\hline
    $x+1, u$ & $10$ & $1$ & $*$\\
\hline
   $(x+1)^2, u$ &$10$ & $1$& $*$\\
\hline
  $(x+1)^2, u(x+1) $ &$9$ & $2$& $*$\\
\hline
  $(x+1)^3, u$  & $10$ & $1$& $*$\\
\hline
 $(x+1)^3, u(x+1)^j$ where $ 1 \leq j \leq 2$ & $9,8$ &$2$& $*$, for $j=1$\\
\hline
   $(x+1)^4, u$ &$10$ &  $1$& $*$\\
\hline
 $(x+1)^4, u(x+1)$ &$9$ & $2$& $*$\\
\hline
 $(x+4)^2, u $ &$10$ & $1$& $*$\\
\hline
   $(x+1)^2+ux$ &  $8$ & $2$& \\
\hline
 $(x+4)^2+ u x$ &  $8$ &$2$&\\
\hline
   $(x+1)^i(x+4)^{2},  $  where $ i=1,2$ &  $7,6$ &$3$&\\
\hline
  $(x+1)^3(x+4)^{2} $  &  $5$ &$4$&\\
\hline
   $(x+1)^4(x+4)^{2}  $  &  $4$ &$5$&\\
\hline
  $(x+1)^5(x+4)^{2}  $  &  $3$ &$6$&\\
\hline
   $(x+1)(x+4)^{4}  $  &  $5$ &$4$&\\
\hline
  $(x+1)^i(x+4)^{4}, $ ~\text{where}~ $ 2 \leq i \leq 4$ &  $4,3,2$ &$5$&\\
\hline
  $(x+1)^5(x+4)^{4}  $  &  $1$ &$10$& $*$\\
\hline
  $(x+1)^4+ u x^2, u$ &  $10$ &$1$& $*$\\
\hline
   $(x+1)^4+ u x^2, u(x+1)$ & $9$ &$2$& $*$\\
\hline
  $(x+1)^2+u x, u(x+1)^2$ & $8$ &$2$&\\
\hline
  $(x+1)^2(x+4)^2+u(x^3+x)$ & $6$ &$3$&\\
\hline
$(x+1)^2(x+4)^2+u(x^3+x^2+x)$ & $6$ &$3$&\\
\hline
$(x+1)(x+4)^2+u(x^2+x)$ & $7$ &$3$&\\
\hline
\end{longtable}
\end{example}
\hspace{-.7cm}\textbf{Remark:} In above examples $*$ represents the optimal (MDS) codes calculated by using magma software \cite{Bosma}.
\section{Conclusion}
In this article, we studied reversible cyclic codes of arbitrary length $n$ over the ring $ R = \mathbb{F}_q + u \mathbb{F}_q$, where $u^2=0$. We have provided a unique set of generators for these codes as ideals in the ring $R[x]/\langle x^n-1\rangle$. Moreover, in Section $5$, we have imposed some conditions under which dual of a reversible cyclic code is reversible. In Section $6$, we have given some examples in support of our results.


\begin{thebibliography}{99}
\bibitem{22}  T. Abualrub and R. Oehmke, \textit{On the generators of $\mathbb{Z}_4$ cyclic codes of length $2^e$}, IEEE Trans. Info. Theory $\boldsymbol{49}(9)$ $(2003)$, $2126-2133$.
\bibitem{T}  T. Abualrub and I. Siap, \textit{ Cyclic codes over the ring $\mathbb{Z}_2+u\mathbb{Z}_2$ and $\mathbb{Z}_2+u\mathbb{Z}_2+u^2\mathbb{Z}_2$}, Des. Codes Cryptogr. $\boldsymbol{42}$ $(2007)$, $273-287$.
\bibitem{12}  T. Abualrub and I. Siap, \textit{On the construction of cyclic codes over the ring $\mathbb{Z}_2+u\mathbb{Z}_2$}, Proc. $9^{th}$ WSEAS Internat. Conf. Appl. Math., Istanbul, Turkey,  $430-435$, May $2006$.
\bibitem{13}  T. Abualrub and I. Siap, \textit{Reversible cyclic codes over $\mathbb{Z}_4$},  Australas. J. Combin. $\boldsymbol{38}$ $(2007)$, $195-205$.

\bibitem{14} T. Abualrub and I. Siap, \textit{Reversible quaternary cyclic codes}, Proc. $9^{th}$ WSEAS Internat. Conf. Appl. Math. Istanbul, Turkey, $441-446$, May $2006$.
\bibitem{Bosma} W. Bosma, J. Cannon, \textit{Handbook of magma functions}, Univ. of Sydney, Sydney, $1995$.
\bibitem{20}  A. Buyuklieva, \textit{On the binary self-dual codes with an automorphism of order $2$}, Des. Codes Cryptogr. $\boldsymbol{12}$ $(1997)$, $39-48$.


\bibitem{19}  J. Kaur, S. Dutt and R. Sehmi, \textit{Reversible and reversible complement cyclic codes over Galois rings}, J. Int. Acad. Phys. Sci. $\boldsymbol{19}(2)$ $(2015)$, $117-121$.


\bibitem{15} J. L. Massey, \textit{Reversible codes},
Inf. Control $\boldsymbol{7}(3)$ $(1964)$, $369- 380$.

\bibitem{11}  B. Srinivasulu and M. Bhaintwal, \textit{Reversible cyclic codes over $\mathbb{F}_4+u\mathbb{F}_4$ and their applications to DNA codes}, Proc. $7^{th}$ International Conference on Information Technology and Electrical Engineering, $101- 105$, $(2015)$.

\bibitem{23}  L. Ping and Z. Shixin, \textit{Cyclic codes of arbitrary lengths over the ring $\mathbb{F}_q +u \mathbb{F}_q$}, J. Univ. Sci. Technol. China $\boldsymbol{38}(12)$ $(2008)$, $1392-1396$.

\bibitem{18}   K. Tzeng and C. Hartmann,\textit{ On the minimum distance of certain reversible cyclic codes}, IEEE Trans. Info. Theory $\boldsymbol{16}(5)$ $(1970)$, $ 644 - 646.$

  \end{thebibliography}
\end{document}